	\newtheorem{definition}{Definition}[]
	\newtheorem{lemma}{Lemma}[]
\newcommand{\cX}{\mathcal{X}}
\newcommand{\cY}{\mathcal{Y}}
\newcommand{\cS}{\mathcal{S}}
\newcommand{\bbP}{\mathbb{P}}
\newcommand{\bbR}{\mathbb{R}}
\newcommand{\defeq}{\mathrel{:\mkern-0.25mu=}}
\renewcommand{\vec}[1]{{#1}}
\newcommand{\mat}[1]{{#1}}
\title{Correlation in Extensive-Form Games:\\Saddle-Point Formulation and Benchmarks\thanks{This paper was accepted for publication at NeurIPS 2019.}}
\author{Gabriele Farina\\
Computer Science Department\\
Carnegie Mellon University\\
\texttt{gfarina@cs.cmu.edu}
\And Chun Kai Ling\\
Computer Science Department\\
Carnegie Mellon University\\
\texttt{chunkail@cs.cmu.edu}
\And Fei Fang\\
Institute for Software Research\\
Carnegie Mellon University\\
\texttt{feif@cs.cmu.edu}
\And Tuomas Sandholm\\
Computer Science Department, CMU\\
Strategic Machine, Inc.\\
Strategy Robot, Inc.\\
Optimized Markets, Inc.\\
\texttt{sandholm@cs.cmu.edu}
}
\begin{document}
    \maketitle
    \begin{abstract}
While Nash equilibrium in extensive-form games is well understood, very little is known about the properties of \textit{extensive-form correlated equilibrium (EFCE)}, both from a behavioral and from a computational point of view. In this setting, the strategic behavior of players is complemented by an external device that privately recommends moves to agents as the game progresses; players are free to deviate at any time, but will then not receive future recommendations. Our contributions are threefold. First, we show that an EFCE can be formulated as the solution to a bilinear saddle-point problem. To showcase how this novel formulation can inspire new algorithms to compute EFCEs, we propose a simple subgradient descent method which exploits this formulation and structural properties of EFCEs. Our method has better scalability than the prior approach based on linear programming. Second, we propose two benchmark games, which we hope will serve as the basis for future evaluation of EFCE solvers. These games were chosen so as to cover two natural application domains for EFCE: conflict resolution via a mediator, and bargaining and negotiation. Third, we document the qualitative behavior of EFCE in our proposed games. 
We show that the social-welfare-maximizing equilibria in these games are highly nontrivial and exhibit surprisingly subtle \emph{sequential} behavior that so far has not received attention in the literature.
\end{abstract}

    \section{Introduction}
Nash equilibrium (NE)~\citep{Nash50:Equilibrium}, the most seminal concept in non-cooperative game theory, captures a multi-agent setting where each agent is
selfishly motivated to maximize their own payoff.
The assumption underpinning NE is that
the interaction is completely \emph{decentralized}: the behavior of each
agent is not regulated by any external orchestrator. Contrasted with
the other---often utopian---extreme of a fully managed interaction,
where an external dictator controls the behavior of each agent so that
the whole system moves to a desired state, the social welfare that can
be achieved by NE is generally lower, sometimes dramatically
so \citep{Koutsoupias99:Worst,Roughgarden02:How}. Yet, in many realistic interactions, some intermediate form of
centralized control can be achieved. In particular, in his landmark
paper,~\citet{Aumann74:Subjectivity} proposed the concept of
\emph{correlated equilibrium} (CE), where a mediator (the
\emph{correlation device}) can \emph{recommend} behavior, but not
\emph{enforce it}. In a CE, the correlation device is constructed so that the agents---which are still modeled as fully rational
and selfish just like in an NE---have no incentive to deviate from the private
recommendation. Allowing correlation of actions while ensuring selfishness makes CE a
good candidate solution concept in multi-agent and semi-competitive settings such as traffic control,
load balancing~\citep{Ashlagi08:Value}, and carbon abatement~\citep{Ray09Correlated}, and it can lead to win-win outcomes.

In this paper, we study the natural extension of correlated equilibrium in  \emph{extensive-form} (i.e., sequential) games, known as extensive-form
correlated equilibrium (EFCE) ~\citep{vonStengel08:Extensive}. Like CE, EFCE
assumes that the strategic interaction is complemented by an external
mediator; however, in an EFCE the mediator only privately reveals
the recommended next move to each \emph{acting player}, instead of revealing the whole plan of action throughout the game (i.e., recommended move at \emph{all} decision points) for each player at the beginning of the game. Furthermore,
while each agent is free to defect from the recommendation at any time, this
comes at the cost of future recommendations.

While the properties of correlation in \emph{normal-form} games are well-studied, they do not automatically transfer to the richer world of sequential interactions. It is known in the study of NE that sequential interactions can pose different challenges, especially in settings where the agents retain private information. Conceptually, the players can strategically adjust to dynamic observations about the environment and their opponents as the game progresses. Despite tremendous interest and progress in recent years for computing NE in sequential interactions with private information, with significant milestones achieved in poker games~\citep{Bowling15:Heads,Brown17:Superhuman,Moravvcik17:DeepStack,Brown19:Superhuman} and
other large, real-world domains, not much has been done to increase our
understanding of (extensive-form) correlated equilibria in these settings.

\textbf{Contributions}\quad Our primary objective with this paper is to spark more
interest in the community towards a deeper understanding of the behavioral and computational aspects of EFCE.
\begin{itemize}[nolistsep,itemsep=0mm,leftmargin=*]
    \item In \cref{sec:saddle point} we show that an EFCE in a two-player general-sum game is the solution to a bilinear saddle-point problem (BSPP). This conceptual reformulation complements the EFCE construction by~\citet{vonStengel08:Extensive}, and allows for the development of new and efficient algorithms. As a proof of concept, by using our reformulation we devise a variant of projected subgradient descent which outperforms linear-programming(LP)-based algorithms proposed by \citet{vonStengel08:Extensive} in large game instances.

    \item In \cref{sec:benchmarks} we propose two benchmark games; each game is parametric, so that these games can scale in size as desired. The first game is a general-sum variant of the classic war game \emph{Battleship}. The second game is a simplified version of the \emph{Sheriff of Nottingham} board game. 
    These games were chosen so as to cover two natural application domains for EFCE: conflict resolution via a mediator, and bargaining and negotiation.

    \item By analyzing EFCE in our proposed benchmark games, we show that even if the mediator cannot enforce behavior, it can induce significantly higher social welfare than NE and successfully deter players from deviating in at least two (often connected) ways: (1) using certain sequences of actions as `passcodes' to verify that a player has not deviated: defecting leads to incomplete or wrong passcodes which indicate deviation, and (2) inducing opponents to play punitive actions against players that have deviated from the recommendation, if such a deviation is detected.
    Crucially, both deterrents are unique to \emph{sequential} interactions and do not apply to non-sequential games. This corroborates the idea that {the mediation of sequential interactions is a qualitatively different problem than that of non-sequential games and further justifies the study of EFCE as an interesting direction for the community}. To our knowledge, these are the first experimental results and observations on EFCE in the literature.
\end{itemize}
The source code for our game generators and subgradient method is published online\footnote{\small{https://github.com/Sandholm-Lab/game-generators} \quad{https://github.com/Sandholm-Lab/efce-subgradient}}. 
    \section{Preliminaries}\label{sec:prelims}
  Extensive-form games (EFGs) are sequential games that are played over a rooted game tree. Each node in the tree belongs to a player and corresponds to a decision point for that player. Outgoing edges from a node $v$ correspond to actions that can be taken by the player to which $v$ belongs. Each terminal node in the game tree is associated with a tuple of payoffs that the players receive should the game end in that state. To capture imperfect information, the set of vertices of each player is partitioned into \emph{information sets}. The vertices in a same information set are indistinguishable to the player that owns those vertices. For example, in a game of Poker, a player cannot distinguish between certain states that only differ in opponent's private hand. As a result, the strategy of the player (specifying which action to take) is defined on the information sets instead of the vertices.
  For the purpose of this paper, we only consider \textit{perfect-recall} EFGs. This property means that each player does not forget any of their previous action, nor any private or public observation that the player has made. The perfect-recall property can be formalized by requiring that for any two vertices in a same information set, the paths from those vertices to the root of the game tree contain the exact same sequence of actions for the acting player at the information set.

  A pure normal-form strategy for Player $i$ defines a choice of action for \textit{every} information set that belongs to $i$. A player can play a mixed strategy, i.e., sample from a distribution over their pure normal-form strategies. However, this representation contains redundancies: some information sets for Player $i$ may become unreachable reachable after the player makes certain decisions higher up in the tree. Omitting these redundancies leads to the  notion of \textit{reduced-normal-form} strategies, which are known to be strategically equivalent to normal-form strategies (e.g.,~\citep{Shoham09:Multiagent} for more details). Both the normal-form and the reduced-normal-form representation are exponentially large in the size of the game tree.

Here, we fix some notations. Let $Z$ be the set of terminal states (or equivalently, outcomes) in the game and $u_i(z)$ be the utility obtained by player $i$ if the game terminates at $z\in Z$.
Let $\Pi_i$ be the set of pure reduced-normal-form strategies for Player $i$.
We define $\Pi_i(I)$, $\Pi_i(I, a)$ and $\Pi_i(z)$ to be the set of reduced-normal-form strategies that (a) can lead to information set $I$, (b) can lead to $I$ and prescribes action $a$ at information set $I$, and (c) can lead to the terminal state $z$, respectively. We denote by $\Sigma_i$ the set of information set-action pairs $(I, a)$ (also referred to as \emph{sequences}), where $I$ is an information set for Player $i$ and $a$ is an action at set $I$. For a given terminal state $z$ let $\sigma_i(z)$ be the last $(I, a)$ pair belonging to Player $i$ encountered in the path from the root of the tree to $z$. 

  \textbf{Extensive-Form Correlated Equilibrium}\quad
    Extensive-form correlated equilibrium (EFCE) is a solution concept for extensive-form games introduced by~\citet{vonStengel08:Extensive}.\footnote{Other CE-related solution concepts in sequential games include the agent-form correlated equilibrium (AFCE), where agents continue to receive recommendations even upon defection, and normal-form coarse CE (NFCCE). 
    NFCCE does not allow for defections during the game, in fact, before the game starts, players must decide to commit to following \emph{all} recommendations upfront (before receiving them), or elect to receive none.} Like in the traditional correlated equilibrium (CE), introduced by~\citet{Aumann74:Subjectivity}, a \emph{correlation device} selects private signals for the players before the game starts. These signals are sampled from a correlated distribution $\mu$---a joint probability distribution over $\Pi_1 \times \Pi_2$---and represent recommended player strategies. However, while in a CE the recommended moves for the whole game tree are privately revealed to the players when the game starts, in an EFCE the recommendations are revealed incrementally as the players progress in the game tree. In particular, a recommended move is only revealed when the player reaches the decision point in the game for which the recommendation is relevant. Moreover, if a player ever deviates from the recommended move, they will stop receiving recommendations. To concretely implement an EFCE, one places recommendations into `sealed envelopes' which may only be opened at its respective information set. Sealed envelopes may implemented using cryptographic techniques (see \cite{Dodis00Ccryptographic} for one such example).
    

    In an EFCE, the players know less about the set of recommendations that were sampled by the correlation device. The benefits are twofold. First, the players can be more easily induced to play strategies that hurt them (but benefit the overall social welfare), as long as ``on average'' the players are indifferent as to whether or not to follow the recommendations: the set of EFCEs is a \emph{superset} of that of CEs. Second, since the players observe less, the set of probability distributions for the correlation device for which no player has an incentive to deviate can be described succinctly in certain classes of games:
    \citet[Theorem 1.1]{vonStengel08:Extensive} show that in two-player, perfect-recall extensive-form games with no chance moves, the set of EFCEs can be described by a system of linear equations and inequalities of polynomial size in the game description.
    On the other hand,  the same result cannot hold in more general settings: \citet[Section 3.7]{vonStengel08:Extensive} also show that in games with more than two players and/or chance moves, deciding the existence of an EFCE with social welfare greater than a given value is NP-hard.
    %
    %
    It is important to note that this last result only implies that the characterization of the set of \emph{all} EFCEs cannot be of polynomial size in general (unless $\text{P}=\text{NP}$). However, the problem of finding \emph{one} EFCE can be solved in polynomial time: \citet{Huang11:Equilibrium} and~\citet{Huang08:Computing} show how to adapt the \emph{Ellipsoid Against Hope} algorithm~\citep{Papadimitriou08:Computing,Jiang15:Polynomial} to compute an EFCE in polynomial time in games with more than two players and/or with chance moves. Unfortunately, that algorithm is only theoretical, and known to not scale beyond extremely small instances \citep{Leyton-Brown19:Personal}.

    \section{Extensive-Form Correlated Equilibria as Bilinear Saddle-Point Problems}\label{sec:saddle point}
Our objective for this section is to cast the problem of finding an EFCE in a two-player game as a bilinear saddle-point problem, that is a problem of the form
$
  \min_{\vec{x} \in \cX}\max_{\vec{y} \in \cY}\  \vec{x}^{\!\top}\!\! \mat{A} \vec{y},
$
where $\cX$ and $\cY$ are compact convex sets. In the case of EFCE, $\cX$ and $\cY$ are convex polytopes that belong to a space whose dimension is polynomial in the game tree size.
This reformulation is meaningful:
\begin{itemize}[leftmargin=*,nolistsep,itemsep=0mm]
  \item From a conceptual angle, it brings the problem of computing an EFCE closer to several other solution concepts in game theory that are known to be expressible as BSPP. In particular, the BSPP formulation shows that an EFCE can be viewed as a NE in a two-player zero-sum game between a \emph{deviator}, who is trying to decide how to best defect from recommendations, and a \emph{mediator}, who is trying to come up with an incentive-compatible set of recommendations.
  \item From a geometric point of view, the BSPP formulation better captures the combinatorial structure of the problem: $\cX$ and $\cY$ have a well-defined meaning in terms of the input game tree. This has algorithmic implications: for example, because of the structure of $\cY$ (which will be detailed later), the inner maximization problem can be solved via a single bottom-up game-tree traversal.
  \item From a computational standpoint, it opens the way to the plethora of optimization algorithms (both general-purpose and those specific to game theory) that have been developed to solve BSPPs. Examples include Nesterov's excessive gap technique~\citep{Nesterov05:Excessive}, Nemirovski's mirror prox algorithm~\citep{Nemirovski04:Prox} and regret-methods based methods such as mirror descent, follow-the-regularized-leader (e.g.,~\cite{Hazan16:Introduction}), and CFR and its variants~\cite{Zinkevich07:Regret,Farina19:Online,Brown19:Solving}.
\end{itemize}
Furthermore, it is easy to show that by dualizing the inner maximization problem in the BSPP formulation, one recovers the linear program introduced by~\citet{vonStengel08:Extensive} (we show this in Appendix ~\ref{sec:appendix_dualize}). In this sense, our formulation subsumes the existing one.

\textbf{Triggers and Deviations}\quad
One effective way to reason about extensive-form correlated equilibria is via the notion of \emph{trigger agents}, which was introduced (albeit used in a different context) in \citet{Gordon08:No} and \citet{Dudik09:Sampling}:
\begin{definition}
  Let $\hat\sigma \defeq\! (\hat I, \hat a) \in \Sigma_i$ be a sequence for Player $i$, and let $\hat\mu$ be a distribution over $\Pi_i(\hat{I})$.
  A $(\hat\sigma, \hat \mu)$-trigger agent for Player $i$ is a player that follows all recommendations given by the mediator unless they get recommended $\hat a$ at $\hat I$; in that case, the player `gets triggered', stops following the recommendations and instead plays based on a pure strategy sampled from $\hat\mu$ until the game ends.
\end{definition}
A correlated distribution $\mu$ is an EFCE if and only if any trigger agent for Player $i$ can get utility at most equal to the utility that Player $i$ earns by following the recommendations of the mediator at all decision points. In order to express the utility of the trigger agent, it is necessary to compute the probability of the game ending in each of the terminal states.
As we show in Appendix~\ref{sec:appendix_terminal_states}, this can be done concisely by partitioning the set of terminal nodes in the game tree into three different sets. In particular, let $Z_{\hat I, \hat a}$ be the set of terminal nodes whose path from the root of the tree contains taking action $\hat a$ at $\hat I$ and let $Z_{\hat I}$ be the set of terminal nodes whose path from the root passes through $\hat I$ and are \emph{not in} $Z_{\hat I,\hat a}$. 
We have
\begin{lemma}
  Consider a $(\hat\sigma,\hat\mu)$-trigger agent for Player 1, where $\hat\sigma=(\hat I,\hat a)$. The \emph{value} of the trigger agent, defined as the expected difference between the utility of the trigger agent and the utility of an agent that always follows recommendations sampled from correlated distribution $\mu$, is computed as
  \[
      v_{1, \hat\sigma}(\mu , \hat\mu) \defeq \sum_{z \in Z_{\hat I}} u_{1}(z)\xi_{1}(\hat\sigma; z) y_{1,\hat\sigma}(z) - \sum_{z \in Z_{\hat I,\hat a}} u_{1}(z) \xi_{1}(\sigma_1(z) ; z),
  \]
  where $\xi_1(\hat\sigma; z) \defeq \sum_{\pi_1 \in \Pi_1(\hat\sigma)}\sum_{\pi_2 \in \Pi_2(z)} \mu(\pi_1, \pi_2)$ and $y_{1,\hat\sigma}(z) \defeq \sum_{{\hat\pi}_1\in\Pi_1(z)} \hat\mu(\hat\pi_1)$.
  \label{lem:dev value}
\end{lemma}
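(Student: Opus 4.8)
The plan is to compute $v_{1,\hat\sigma}(\mu,\hat\mu)$ by coupling the trigger agent with the always-follow agent on a single draw of the randomness and then summing the resulting difference over terminal nodes. Sample once the recommended pair $(\pi_1,\pi_2)\sim\mu$ and, independently, the deviation strategy $\hat\pi_1\sim\hat\mu$. Let $z^{\mathrm f}$ be the terminal reached when Player~1 plays $\pi_1$ and Player~2 plays $\pi_2$, and let $z^{\mathrm t}$ be the terminal reached when Player~1 instead plays as the $(\hat\sigma,\hat\mu)$-trigger agent against the same Player~2: follow $\pi_1$ until (if ever) $\hat I$ is reached, and if the recommendation there is $\hat a$, switch to $\hat\pi_1$ from $\hat I$ onward. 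Since Player~2 never defects, Player~2 plays according to $\pi_2$ in both scenarios regardless of Player~1's behavior; and, with no chance moves, a pure-strategy pair reaches a terminal $z$ iff it lies in $\Pi_1(z)\times\Pi_2(z)$. Hence $v_{1,\hat\sigma}(\mu,\hat\mu)=\mathbb{E}[u_1(z^{\mathrm t})-u_1(z^{\mathrm f})]=\sum_{z\in Z}u_1(z)(\Pr[z^{\mathrm t}=z]-\Pr[z^{\mathrm f}=z])$.

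Next I would split $Z$ using the three-way partition of Appendix~\ref{sec:appendix_terminal_states}: terminals not passing through $\hat I$, the set $Z_{\hat I}$, and the set $Z_{\hat I,\hat a}$. If the path to $z$ never visits $\hat I$, the trigger is never activated on the way to $z$, so on this event $z^{\mathrm t}=z^{\mathrm f}$ and these terminals contribute $0$ to the difference. For $z\in Z_{\hat I,\hat a}$: the always-follow agent reaches $z$ with probability $\sum_{\pi_1\in\Pi_1(z),\pi_2\in\Pi_2(z)}\mu(\pi_1,\pi_2)$, which equals $\xi_1(\sigma_1(z);z)$ once we use $\Pi_1(z)=\Pi_1(\sigma_1(z))$ (a reduced-normal-form strategy can reach $z$ iff it plays the last Player-1 sequence on the path to $z$); the trigger agent, on the other hand, never reaches such a $z$, since an untriggered agent at $\hat I$ was recommended an action other than $\hat a$ and plays it, while a triggered agent at $\hat I$ plays an action other than $\hat a$ as well. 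Thus $z\in Z_{\hat I,\hat a}$ contributes $-u_1(z)\,\xi_1(\sigma_1(z);z)$, the second sum of the statement.

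The crux is the set $Z_{\hat I}$. For $z\in Z_{\hat I}$, I would decompose $\Pr[z^{\mathrm t}=z]$ over the disjoint events ``trigger not activated'' and ``trigger activated''. When not activated, the trigger agent coincides with $\pi_1$, and since $z\in Z_{\hat I}$ forces $\pi_1\in\Pi_1(z)$ to take the non-$\hat a$ action at $\hat I$, this event has probability $\xi_1(\sigma_1(z);z)=\Pr[z^{\mathrm f}=z]$, so it cancels in the difference. When activated, the trigger agent must (i) reach $\hat I$ and be recommended $\hat a$, which together with $\pi_2$ being consistent with the full path to $z$ is the event $\{\pi_1\in\Pi_1(\hat\sigma),\ \pi_2\in\Pi_2(z)\}$ of $\mu$-probability $\xi_1(\hat\sigma;z)$, and then (ii) have $\hat\pi_1$ carry play from $\hat I$ to $z$, namely $\hat\pi_1\in\Pi_1(z)$, of $\hat\mu$-probability $y_{1,\hat\sigma}(z)$. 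By independence of $\hat\pi_1$ from $(\pi_1,\pi_2)$ these multiply, so $z\in Z_{\hat I}$ contributes $u_1(z)\,\xi_1(\hat\sigma;z)\,y_{1,\hat\sigma}(z)$. Summing the three groups yields the claimed identity.

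The part that requires care, rather than calculation, is the precise semantics of the trigger agent at $\hat I$ itself and its propagation into the $Z_{\hat I,\hat a}$ bookkeeping: one must be explicit that ``getting triggered'' entails not re-selecting the recommended action $\hat a$ at $\hat I$ --- equivalently, that $\hat\mu$ is effectively supported on continuations prescribing an action other than $\hat a$ there --- which is exactly what prevents the trigger agent from ever reaching a terminal in $Z_{\hat I,\hat a}$ and hence restricts the first sum to $Z_{\hat I}$. A second, elementary but essential point is the path-splitting bookkeeping at $\hat I$: one must check that ``$\pi_2$ consistent with the path to $\hat I$'' together with ``$\pi_2$ consistent with the path from $\hat I$ to $z$'' is precisely ``$\pi_2\in\Pi_2(z)$'', and the analogous identity $\Pi_1(z)=\Pi_1(\sigma_1(z))$; both follow from perfect recall and the tree structure, but need to be invoked explicitly so that the probabilities above are correctly identified with $\xi_1(\cdot;z)$ and $y_{1,\hat\sigma}(z)$.
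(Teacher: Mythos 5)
Your argument follows essentially the same route as the paper's derivation in Appendix~\ref{sec:appendix_terminal_states}: partition the terminals into those whose path avoids $\hat I$, those in $Z_{\hat I}$, and those in $Z_{\hat I,\hat a}$; compute each terminal's reach probability for the trigger agent versus the obedient agent; and observe that terminals off $\hat I$ cancel entirely while the ``not triggered'' contribution on $Z_{\hat I}$ cancels against the obedient agent's term, leaving $\xi_1(\hat\sigma;z)\,y_{1,\hat\sigma}(z)$. The explicit coupling of the two agents on a single draw of $(\pi_1,\pi_2,\hat\pi_1)$ is a presentational choice, not a different argument, and your identifications $\Pi_1(z)=\Pi_1(\sigma_1(z))$ and the $\xi_1$/$y_{1,\hat\sigma}$ bookkeeping match the paper's.

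The one substantive divergence is exactly the point you flagged: the terminals in $Z_{\hat I,\hat a}$. You assume that a triggered agent never re-selects $\hat a$ at $\hat I$, so the trigger agent reaches no terminal of $Z_{\hat I,\hat a}$ and those terminals contribute only $-u_1(z)\,\xi_1(\sigma_1(z);z)$. The paper's appendix takes the opposite convention: $\hat\mu$ ranges over all of $\Pi_1(\hat I)$, including plans that replay $\hat a$, so the trigger agent reaches $z\in Z_{\hat I,\hat a}$ with probability $\xi_1(\hat\sigma;z)\,y_{1,\hat\sigma}(z)$ and the value acquires the additional term $\sum_{z\in Z_{\hat I,\hat a}}u_1(z)\,\xi_1(\hat\sigma;z)\,y_{1,\hat\sigma}(z)$ --- equivalently, the first sum must be read as ranging over all descendants of $\hat I$, i.e.\ $Z_{\hat I}\cup Z_{\hat I,\hat a}$, rather than over $Z_{\hat I}$ as defined in the main text. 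Your reading reproduces the displayed formula verbatim but silently restricts the class of deviations; the appendix's reading keeps the full class but makes the displayed formula, taken literally with the stated definition of $Z_{\hat I}$, miss that term. Since the lemma feeds into the saddle-point construction where the deviator optimizes $y_{i,\hat\sigma}$ over the \emph{full} sequence-form polytope $Y_{i,\hat\sigma}$ rooted at $\hat I$ (which does permit replaying $\hat a$), the appendix's convention is the one the rest of the paper relies on, and your proof would need the first sum extended to $Z_{\hat I}\cup Z_{\hat I,\hat a}$ to be consistent with it. Apart from this convention issue --- which is arguably an ambiguity in the paper itself rather than an error of yours --- the proof is sound.
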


(A symmetric result holds for Player 2, with symbols $\xi_2(\hat\sigma; z)$ and $y_{2,\hat\sigma}(z)$.) It now seems natural to perform a change of variables, and pick distributions for the random variables $y_{1,\hat\sigma}(\cdot), y_{2,\hat\sigma}(\cdot), \xi_1(\cdot; \cdot)$ and $\xi_2(\cdot; \cdot)$ instead of $\mu$ and $\hat\mu$. Since there are only a polynomial number (in the game tree size) of combinations of arguments for these new random variables, this approach allows one to remove the redundancy of realization-equivalent normal-form plans and focus on a significantly smaller search space. In fact, the definition of $\xi=(\xi_1,\xi_2)$ also appears in~\citep{vonStengel08:Extensive}, referred to as (sequence-form) \emph{correlation plan}. In the case of the $y_{1,\hat\sigma}$ and $y_{2,\hat\sigma}$ random variables, it is clear that the change of variables is possible via the sequence form~\citep{vonStengel02:Computing}; we let $Y_{i,\hat\sigma}$ be the sequence-form polytope of feasible values for the vector $y_{i,\hat\sigma}$. Hence, the only hurdle is characterizing the space spanned by $\xi_1$ and $\xi_2$ as $\mu$ varies across the probability simplex. In two-player perfect-recall games with no chance moves, this is exactly one of the merits of the landmark work by~\citet{vonStengel08:Extensive}. In particular, the authors prove that in those games the space of feasible $\xi$ can be captured by a polynomial number of linear constraints. In more general cases the same does not hold (see second half of~\cref{sec:prelims}), but we prove the following (Appendix~\ref{sec:proof xi is polytope}):

\begin{lemma}\label{lem:xi is polytope}
  \!In a two-player game, as $\mu$ varies over the probability simplex, the joint vector of $\xi_1(\cdot;\cdot)$, $\xi_2(\cdot;\cdot)$ variables spans a convex polytope $\cX$ in $\mathbb{R}^n$, where $n$ is at most quadratic in the game size.
\end{lemma}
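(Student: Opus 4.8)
The plan is to recognize the proposed change of variables $\mu \mapsto \xi = (\xi_1(\cdot\,;\cdot),\xi_2(\cdot\,;\cdot))$ as the restriction to the probability simplex of a single \emph{linear} map, and then to invoke the elementary fact that a linear image of a polytope is again a polytope. Concretely, by the definition used in \cref{lem:dev value}, each coordinate $\xi_1(\hat\sigma; z) = \sum_{(\pi_1,\pi_2)\,\in\,\Pi_1(\hat\sigma)\times\Pi_2(z)} \mu(\pi_1,\pi_2)$ is the sum of the entries of $\mu$ over the combinatorial ``rectangle'' $\Pi_1(\hat\sigma)\times\Pi_2(z)\subseteq\Pi_1\times\Pi_2$, i.e., a linear functional of $\mu\in\bbR^{\Pi_1\times\Pi_2}$ with $\{0,1\}$ coefficients; the symmetric statement holds for the $\xi_2(\hat\sigma;z)$ coordinates. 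Stacking all of these functionals yields one linear map $L:\bbR^{\Pi_1\times\Pi_2}\to\bbR^{n}$ with $L\mu=\xi$ for every $\mu$.

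To pin down $n$ I would count the distinct coordinate functionals of $L$, using perfect recall. A reduced-normal-form strategy of Player~2 can lead to $z$ iff it prescribes the (by perfect recall, unique) chain of Player-2 actions on the root-to-$z$ path, and in reduced normal form this chain is determined by its last element alone; hence $\Pi_2(z)=\Pi_2(\sigma_2(z))$, so $\xi_1(\hat\sigma;z)$ depends on $z$ only through $\sigma_2(z)\in\Sigma_2\cup\{\varnothing\}$. Thus $\xi_1$ has at most $|\Sigma_1|\cdot(|\Sigma_2|+1)$ genuinely distinct coordinates, and symmetrically $\xi_2$ at most $(|\Sigma_1|+1)\cdot|\Sigma_2|$. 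Since $|\Sigma_1|$ and $|\Sigma_2|$ are each bounded by the number of edges of the game tree, $n=O(|\Sigma_1|\,|\Sigma_2|)$ is at most quadratic in the game size. (Even without this reduction, indexing naively by pairs $(\hat\sigma,z)$ already gives $n\le|\Sigma_1|\cdot|Z|+|Z|\cdot|\Sigma_2|$, which is still quadratic, so perfect recall is needed here only to sharpen the constant.)

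It then remains to conclude. The domain $\Delta(\Pi_1\times\Pi_2)$ is a standard simplex, hence a polytope whose vertices are the Dirac distributions $e_{(\pi_1,\pi_2)}$, $(\pi_1,\pi_2)\in\Pi_1\times\Pi_2$. Since a linear image of a polytope is the convex hull of the images of its vertices, the set of attainable correlation vectors is $\cX\defeq L\bigl(\Delta(\Pi_1\times\Pi_2)\bigr)=\co\{\,Le_{(\pi_1,\pi_2)} : (\pi_1,\pi_2)\in\Pi_1\times\Pi_2\,\}$, a convex hull of finitely many points in $\bbR^{n}$, which is by definition a bounded convex polytope.

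There is no real obstacle here; the only points deserving care are the bookkeeping behind the dimension count (the identity $\Pi_2(z)=\Pi_2(\sigma_2(z))$ and the resulting collapse of coordinates), and the fact that ``polytope'' must be understood as ``convex hull of finitely many vertices'' rather than ``intersection of few, explicitly describable halfspaces'': the latter, much stronger property is exactly what \citet{vonStengel08:Extensive} establish for two-player chance-free games and what provably cannot hold in the general setting recalled in the second half of \cref{sec:prelims}.
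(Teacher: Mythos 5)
Your proof is correct and follows exactly the same route as the paper's (Appendix~\ref{sec:proof xi is polytope}): $\xi$ is a linear image of $\mu$, the domain is the probability simplex, and linear images of polytopes are polytopes. The paper's version is three sentences and omits the dimension count entirely; your explicit bookkeeping of $n$ via the collapse $\Pi_2(z)=\Pi_2(\sigma_2(z))$ and the convex-hull-of-vertex-images formulation is a welcome elaboration, not a different argument.
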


\textbf{Saddle-Point Reformulation}\quad
According to \cref{lem:dev value}, for each Player $i$ and $(\hat\sigma, \hat\mu)$-trigger agent for them, the value of the trigger agent is a biaffine expression in the vectors $y_{i,\hat\sigma}$ and $\xi_i$, and can be written as $v_{i, \hat\sigma}(\xi_i, y_{i, \hat\sigma}) = \xi_i^\top A_{i, \hat\sigma} y_{i,\hat\sigma} - b_{i,\hat\sigma}^\top \xi_i$ for a suitable matrix $A_{i, \hat\sigma}$ and vector $b_{i,\hat\sigma}$, where the two terms in the difference correspond to the expected utility for deviating at $\hat\sigma$ according to the (sequence-form) strategy $y_{i,\hat\sigma}$ and the expected utility for not deviating at $\hat\sigma$. Given a correlation plan $\xi=(\xi_1,\xi_2)\in\cX$, the maximum value of any deviation for any player can therefore be expressed as
\[
  v^*(\xi) \defeq \max_{\{i, \hat{\sigma}, y_{i, \hat\sigma}\}
  } v_{i, \hat\sigma}(\xi_i,y_{i,\hat\sigma}) =
  \max_{i\in\{1,2\}} \max_{\hat\sigma\in\Sigma_i} \max_{y_{\hat\sigma} \in Y_{\hat\sigma}}\{ \xi_i^\top A_{i, \hat\sigma} y_{i,\hat\sigma} - b_{i,\hat\sigma}^\top \xi_i \}.
\]
We can convert the maximization above into a continuous linear optimization problem by introducing the multipliers $\lambda_{i,\hat\sigma} \in [0,1]$ (one per each Player $i\in\{1,2\}$ and trigger $\hat\sigma\in\Sigma_i$), and write
\[
  v^*(\xi) = \max_{\{\lambda_{i,\hat\sigma}, z_{i,\hat\sigma}\}} \sum_{i}\sum_{\hat\sigma}\xi_i^\top A_{i, \hat\sigma} z_{i,\hat\sigma} - \lambda_{i,\hat\sigma} b_{i,\hat\sigma}^\top \xi_i,
\]
where the maximization is subject to the linear constraints $[C_1]$ $\sum_{i\in\{1,2\}}\sum_{\hat\sigma\in\Sigma_i} \lambda_{i,\hat\sigma} = 1$ and $[C_2]$ $z_{i,\hat\sigma} \in \lambda_{i,\hat\sigma} Y_{i,\hat\sigma}$ for all $i\in\{1,2\},\hat\sigma\in\Sigma_i$. These linear constraints define a polytope $\cY$.

A correlation plan $\xi$ is an EFCE if an only if $v_{i, \hat\sigma}(\xi, y_{i, \hat\sigma}) \le 0$ for every trigger agent, i.e., $v^*(\xi)\le 0$. Therefore, to find an EFCE, we can solve the optimization problem $\min_{\xi\in\cX} v^*(\xi)$, which is a bilinear saddle point problem over the convex domains $\cX$ and $\cY$, both of which are convex polytopes that belong to $\mathbb{R}^n$, where $n$ is at most quadratic in the input game size (\cref{lem:xi is polytope}). If an EFCE exists, the optimal value should be non-positive and the optimal solution is an EFCE (as it satisfies $v^*(\xi)\le 0$). In fact, since EFCE's always exist (as EFCEs are supersets of CEs \cite{vonStengel08:Extensive}), and one can select triggers to be terminal sequences for Player $1$, the optimal value of the BSPP is always $0$.
The BSPP can be interpreted as the NE of a zero-sum game between the \textit{mediator}, who decides on a suitable correlation plan $\xi$ and a \textit{deviator} who selects the $y_{i, \hat\sigma}$'s to maximize each $v_{i, \hat\sigma}(\xi_i, y_{i, \hat\sigma})$. The value of this game is always $0$.

Finally, we can enforce a minimum lower bound $\tau$ on the sum of players' utility by introducing an additional variable $\lambda_{\text{sw}}\in [0,1]$ and maximizing the new convex objective
\begin{equation}\label{eq:v sw}
    v^*_\text{sw}(\xi) \defeq \max_{\lambda_{\text{sw}} \in [0,1]}\left\{ (1-\lambda_\text{sw})\cdot v^*(\xi) + \lambda_\text{sw}\left[ \tau- \sum_{z\in Z} u_1(z)\xi_1(z;z) - \sum_{z\in Z}  u_2(z) \xi_2(z;z)\right]\!\right\}\!.
\end{equation}

    \section{Computing an EFCE using Subgradient Descent}
\cite{vonStengel08:Extensive} show that a SW-maximizing EFCE of a two-player game without chance may be expressed as the solution of an LP and solved using generic methods such as the simplex algorithm or interior-point methods. However, this does not scale to large games as these methods require storing and inverting large matrices. Another way of computing SW-maximizing EFCEs was provided by \cite{Dudik09:Sampling}. However, their algorithm assumes that sampling from correlation plans is possible using the Monte Carlo Markov chain algorithm and does not factor in convergence of the Markov chain. Furthermore, even though their formulation generalizes beyond our setting of two-player games without chance, our gradient descent method admits more complex objectives. In particular, it allows the mediator to maximize over general concave objectives (in correlation plans) instead of only linear objectives with potentially some regularization.
Here, we showcase the benefits of exploiting the combinatorial structure of the BSPP formulation of \cref{sec:saddle point} by proposing a simple algorithm based on subgradient descent; in \cref{sec:experiments} we show that this method scales better than commercial state-of-the-art LP solver in large games.

For brevity, we only provide a sketch of our algorithm, which computes a feasible EFCE; the extension to the slightly more complicated objective $v_\text{sw}^*(\xi)$ (Equation \ref{eq:v sw}) is straightforward---see Appendix~\ref{sec:appendix_bertsekas} for more details.
First, observe that the objective $v^*(\xi)$
is convex since it is the maximum of linear functions of $\xi$. This suggests that we may perform subgradient descent on $v^*$, where the subgradients are given by
$
\partial/\partial \xi\ v^*(\xi) =  A_{i^*, \hat\sigma^*}y^*_{i^*, \hat\sigma^*} - b_{i,\hat\sigma^*},
\label{eq:subgradient_expr}
$
where $(i^*, \hat\sigma^*, y_{i^*, \hat\sigma^*}^*)$ is a triplet which maximizes the objective function $v^*(\xi)$. The computation of such a triplet can be done via a straightforward bottom-up traversal of the game tree.
In order to maintain feasibility (that is, $\xi\in\cX$), it is necessary to project onto $\cX$, which is challenging in practice because we are not aware of any distance-generating function that allows for efficient projection onto this polytope. This is so even in games without chance (where $\xi$ can be expressed by a polynomial number of constraints~\citep{vonStengel08:Extensive}). Furthermore, iterative methods such as Dykstra's algorithm, add a dramatic overhead to the cost of each iterate.

To overcome this hurdle, we observe that in games with no chance moves, the set $\cX$ of correlation plans---as characterized by~\citet{vonStengel08:Extensive} via the notion of consistency constraints---can be expressed as the intersection of three sets:
(i) $\cX_1$, the sets of vectors $\xi$ that only satisfy {consistency constraints} for Player 1; (ii) $\cX_2$, the sets of vectors $\xi$ that only satisfy {consistency constraints} for Player 2; and
(iii) $\bbR^n_+$, the non-negative orthant.
$\cX_1$ and $\cX_2$ are polytopes defined by equality constraints only. Therefore, an exact projection (in the Euclidean sense) onto $\cX_1$ and $\cX_2$ can be carried out efficiently by precomputing a suitable factorization the constraint matrices that define $\cX_1$ and $\cX_2$. In particular, we are able to leverage the specific combinatorial structure of the constraints that form $\cX_1$ and $\cX_2$ to design an efficient and parallel sparse factorization algorithm (see Appendix~\ref{sec:appendix_bertsekas} for the full details). Furthermore, projection onto the non-negative orthant can be done conveniently, as it just amounts to computing a component-wise maximum between ${\xi}$ and the zero vector.
Since $\cX = \cX_1 \cap \cX_2 \cap \bbR^n_+$, and since projecting onto $\cX_1$, $\cX_2$ and $\bbR^n_+$ individually is easy, we can adopt the recent algorithm proposed by \cite{Wang13:Incremental} designed to handle exactly this situation. In that algorithm, gradient steps are interlaced with projections onto $\cX_1$, $\cX_2$ and $\bbR^n_+$ in a cyclical manner. This is similar to projected gradient descent, but instead of projecting onto the intersection of $\cX_1$, $\cX_2$ and $\bbR^n_+$ (which we believe to be difficult), we project onto just one of them in round-robin fashion. This simple method was shown to converge by \cite{Wang13:Incremental}. However, no convergence bound is currently known.

    \vspace{-2mm}
\section{Introducing the First Benchmarks for EFCE}\label{sec:benchmarks}
\vspace{-2mm}

In this section we introduce the first two benchmark games for EFCE. These games are naturally parametric so that they can scale in size as desired and hence used to evaluate different EFCE solvers. In addition, we show that the EFCE in these games are interesting behaviorally: the correlation plan in social-welfare-maximizing EFCE is highly nontrivial and even seemingly counter-intuitive. 
We believe some of these induced behaviors may prove practical in real-world scenarios and hope our analysis can spark an interest in EFCEs and other equilibria in sequential settings.

    \subsection{Battleship: Conflict Resolution via a Mediator}\label{sec:battleship}
    In this section we introduce our first proposed benchmark game to illustrate the power of correlation in extensive-form games. Our game is a general-sum variant of the classic game \textit{Battleship}. Each player takes turns to secretly place a set of ships $\mathcal{S}$ (of varying sizes and value) on separate grids of size $H \times W$. After placement, players take turns firing at their opponent---ships which have been hit at all the tiles they lie on are considered destroyed. The game continues until either one player has lost all of their ships, or each player has completed $r$ shots. At the end of the game, the payoff of each player is computed as the sum of the values of the opponent's ships that were destroyed, minus $\gamma$ times the value of ships which they lost, where $\gamma \ge 1$ is called the \emph{loss multiplier} of the game. The \textit{social welfare} (SW) of the game is the sum of utilities to all players.

    In order to illustrate a few interesting feature of social-welfare-maximizing EFCE in this game, we will focus on the instance of the game with a board of size $3  \times 1$, in which each player commands just $1$ ship of value and length $1$, there are $2$ rounds of shooting per player, and the loss multiplier is $\gamma=2$. In this game, the social-welfare-maximizing \emph{Nash} equilibrium is such that each player places their ship and shoots uniformly at random. This way, the probability that Player 1 and 2 will end the game by destroying the opponent's ship is $\nicefrac{5}{9}$ and $\nicefrac{1}{3}$ respectively (Player 1 has an advantage since they act first). The probability that both players will end the game with their ships unharmed is a meagre $\nicefrac{1}{9}$. Correspondingly, the maximum SW reached by any NE of the game is $\nicefrac{-8}{9}$.

    In the EFCE model, it is possible to induce the players to end the game with a peaceful outcome---that is, no damage to either ship---with probability $\nicefrac{5}{18}$, $2.5$ times of the probability in NE, resulting in a much-higher SW of $\nicefrac{-13}{18}$.
Before we continue with more details as to how the mediator (correlation device) is able to achieve this result in the case where $\gamma=2$, we remark that the benefit of EFCE is even higher when the loss multiplier $\gamma$ increases: Figure~\ref{fig:battleship_vary_gamma} (left) shows, as a function of $\gamma$, the probability with which Player 1 and 2 terminate the game by sinking their opponent's ship, if they play according to the SW-maximizing EFCE. For all values of $\gamma$, the SW-maximizing NE remains the same while with a mediator, the probability of reaching a peaceful outcome increases as $\gamma$ increases, and asymptotically gets closer to $\nicefrac{1}{3}$ and the gap between the expected utility of the two players vanishes. This is remarkable, considering Player 1's advantage for acting first.
 
\begin{figure}[th]
    \centering\includegraphics[scale=.67]{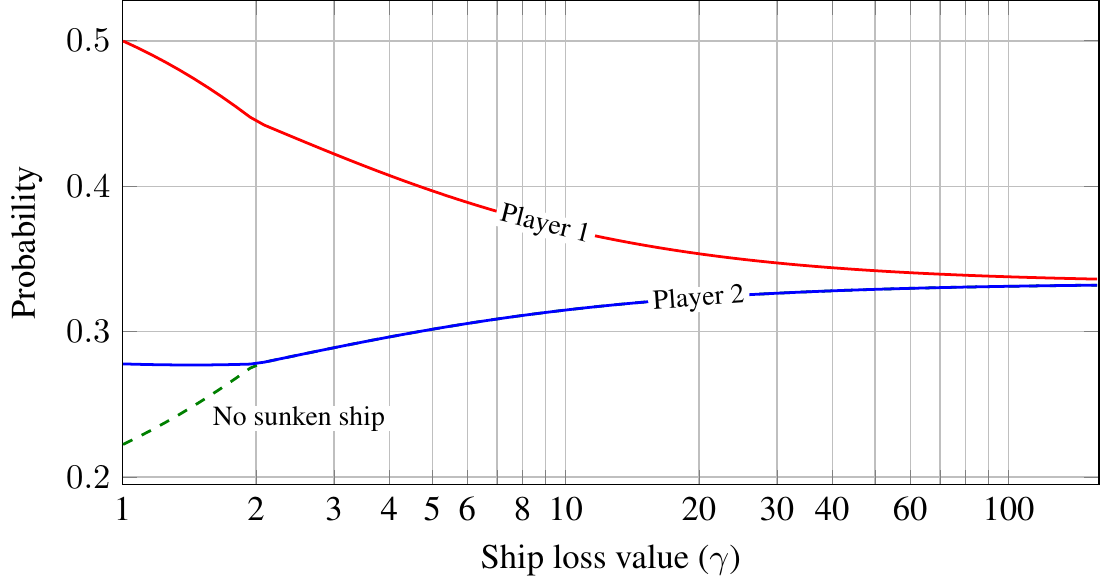}\hfill
    \raisebox{.0cm}{\includegraphics[width=0.42\linewidth]{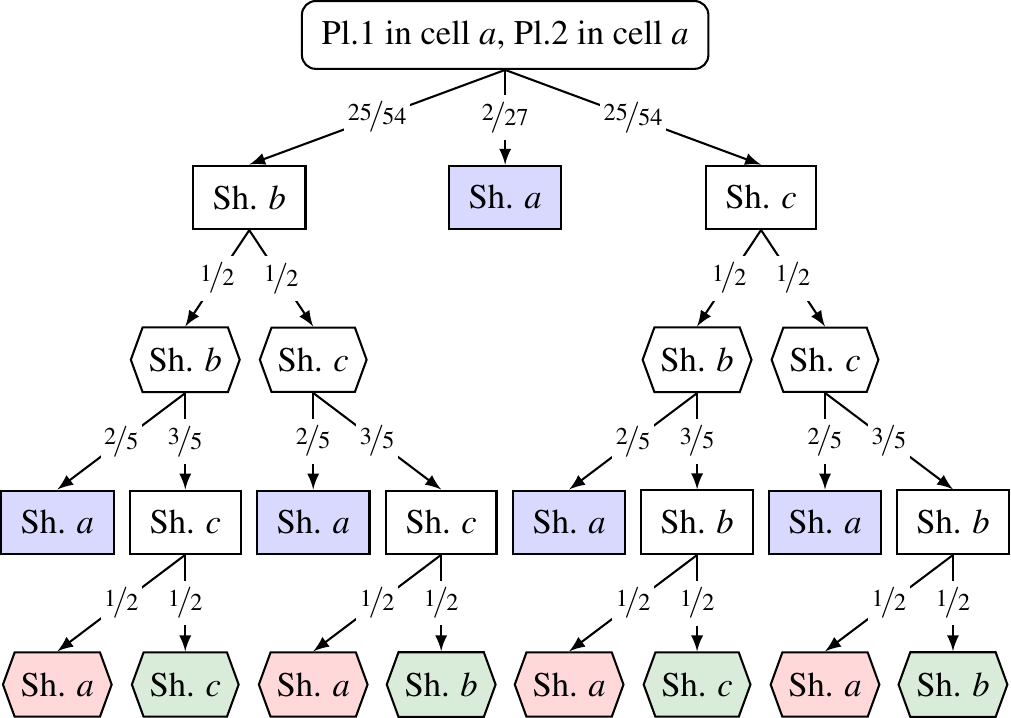}}
    \caption{(Left) Probabilities of players sinking their opponent when the players play according to the SW-maximizing EFCE. For $\gamma \ge 2$, the probability of the game ending with no sunken ship and the probability of Player 2 sinking Player 1 coincide. (Right) Example of a playthrough of Battleship assuming both players are recommended to place their ship in the same position $a$. Edge labels represents the probability of an action being recommended. Squares and hexagons denote actions taken by Players 1 and 2 respectively. Blue and red nodes represent cases where Players 1 and 2 sink their opponent, respectively. The \textit{Shoot} action is abbreviated `Sh.'.}
    \label{fig:battleship_vary_gamma}
\end{figure}

    We now resume our analysis of the SW-maximizing EFCE in the instance where $\gamma=2$.
In a nutshell, the correlation plan is constructed in a way that players are recommended to deliberately miss, and deviations from this are punished by the mediator, who reveals to the opponent the ship location that was \emph{recommended} to the deviating player. 
    First, the mediator recommends the players a ship placement that is sampled uniformly at random and independently for each players. This results in $9$ possible scenarios (one per possible ship placement) in the game, each occurring with probability $\nicefrac{1}{9}$. Due to the symmetric nature of ship placements, only two scenarios are relevant: whether the two players are recommended to place their ship in the same spot, or in different spots. Figure~\ref{fig:battleship_vary_gamma} (right) shows the probability of each recommendation from the mediator in the former case, assuming that the players do not deviate.
    The latter case is symmetric (see Appendix~\ref{sec:appendix_bs} for details). 
    Now, we explain the first of the two methods in which the mediator compels non-violent behavior. We focus on the first shot made by Player 1 (i.e., the root in Figure~\ref{fig:battleship_strategy}).
    The mediator suggests that Player 1 shoot at the Player 2's ship with a low $\nicefrac{2}{27}$ probability, and deliberately miss with high probability. 
    One may wonder how it is possible for this behavior to be incentive-compatible (that is, what are the incentives that compel Player 1 into not defecting), since the player may choose to randomly fire in any of the 2 locations that were \textit{not} recommended, and get almost $\nicefrac{1}{2}$ chance of winning the game immediately. The key is that if Player 1 does so and does not hit the opponent's ship, then the mediator can punish him by recommending that Player 2 shoot in the position where Player 1's was recommended to place their ship. Since players value their ships more than destroying their opponents', the player is incentivized to avoid such a situation by accepting the recommendation to (most probably) miss.
     We see the first example of deterrent used by the mediator: inducing the opponent to play punitive actions against players that have deviated from the recommendation, if ever that deviation can be detected from the player.
     A similar situation arises in the first move of Player 2, where Player 2 is recommended to \textit{deliberately} miss, hitting each of the 2 empty spots with probability $\nicefrac{1}{2}$. A more detailed analysis is available in Appendix~\ref{sec:appendix_bs}. 

    \subsection{Sheriff: Bargaining and Negotiation}\label{sec:sheriff}
    Our second proposed benchmark is a simplified version of the \emph{Sheriff of Nottingham} board game. The game models the interaction of two players: the \textit{Smuggler}---who is trying to smuggle illegal items in their  cargo---and the \textit{Sheriff}---who is trying to stop the Smuggler. At the beginning of the game, the Smuggler secretly loads his cargo with $n \in \{0, \dots, n_\text{max} \}$ illegal items. At the end of the game, the Sheriff  decides whether to inspect the cargo. If the Sheriff chooses to inspect the cargo and finds illegal goods, the Smuggler must pay a fine worth $p \cdot n$ to the Sheriff. On the other hand, the Sheriff has to compensate the Smuggler with a utility $s$ if no illegal goods are found. Finally, if the Sheriff decides not to inspect the cargo, the Smuggler's utility is $v \cdot n$ whereas the Sheriff's utility is $0$. The game is made interesting by two additional elements (which are also present in the board game): \textit{bribery} and \textit{bargaining}. After the Smuggler has loaded the cargo and before the Sheriff chooses whether or not to inspect, they engage in $r$ rounds of bargaining. At each round $i= 1, \dots, r$, the Smuggler tries to tempt the Sheriff into not inspecting the cargo by proposing a bribe $b_i \in \{ 0, \dots b_{\text{max}}\}$, and the Sheriff responds whether or not they would accept the proposed bribe. Only the proposal and response from round $r$ will be executed and have an impact on the final payoffs---that is, all but the $r$-th round of bargaining are non-consequential and their purpose is for the two players to settle on a suitable bribe amount.
    If the Sheriff accepts bribe $b_r$, then the Smuggler gets $p\cdot n - b_r$, while the Sheriff gets $b_r$. See Appendix~\ref{sec:appendix_sheriff} for a formal description of the game.

    We now point out some interesting behavior of EFCE in this game. We refer to the game instance where $v=5, p=1, s=1, n_\text{max}=10, b_\text{max}=2, r=2$ as the \emph{baseline} instance.

    \textbf{Effect of $v, p$ and $s$.} First, we show what happens in the baseline instance when the item value $v$, item penalty $p$, and Sheriff compensation (penalty) $s$ are varied in isolation over a continuous range of values. The results are shown in Figure~\ref{fig:sheriff_continuous_plots_lite}. In terms of general trends, the effect of the parameter to the Smuggler is fairly consistent with intuition: the Smuggler benefits from a higher item value as well as from higher sheriff penalties, and suffers when the penalty for smuggling is increased. However, the finer details are much more nuanced. For one, the effect of changing the parameters not only is non-monotonic, but also discontinuous. This behavior has never been documented and we find it rather counterintuitive. More counterintuitive observations can be found in Appendix~\ref{sec:appendix_sheriff}.
    \begin{figure}[ht]
        \includegraphics[scale=0.68]{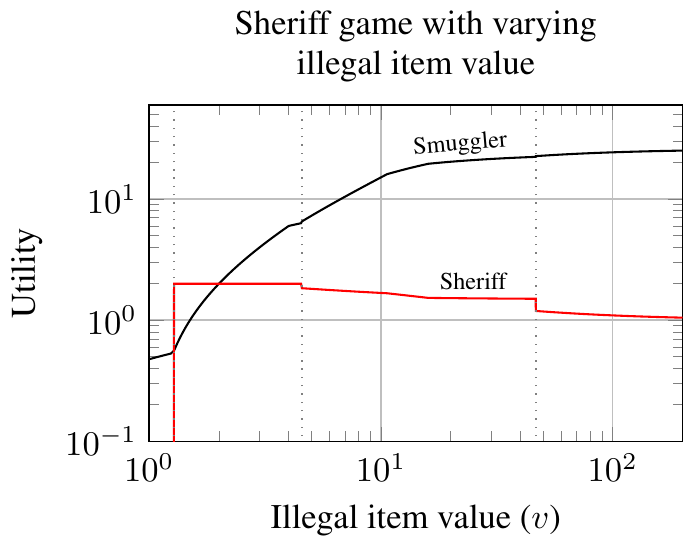}\hfill
        \raisebox{2pt}{\includegraphics[scale=0.68]{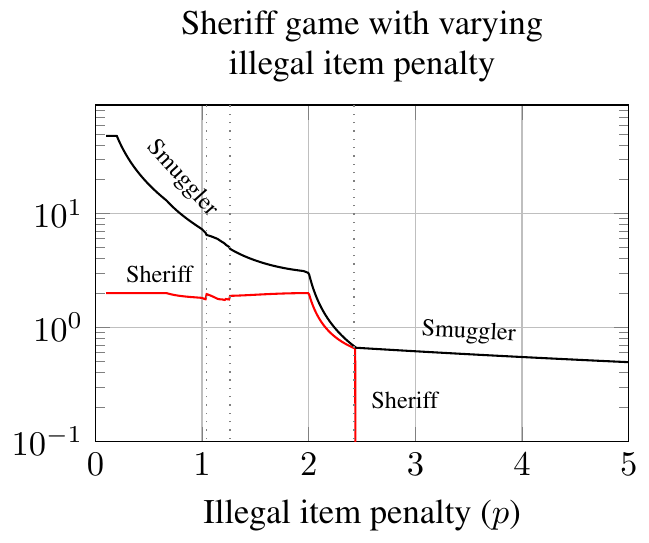}}\hfill
        \includegraphics[scale=0.68]{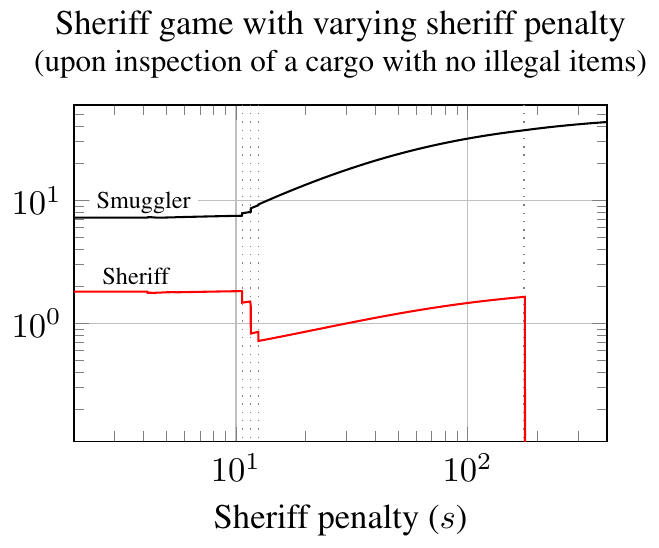}
        \caption{Utility of players with varying $v, p$ and $s$ for the SW-maximizing EFCE. We verified that these plots are not the result of equilibrium selection issues.}
        \label{fig:sheriff_continuous_plots_lite} 
    \end{figure}
    \textbf{Effect of $n_\text{max}, b_\text{max}$, and $r$.} Here, we try to empirically understand the impact of $n$ and $b$ on the SW maximizing equilibrium. As before we set $v=5, p=1, s=1$ and vary $n$ and $r$ simultaneously while keeping $b_\text{max}$ constant. The results are shown in Table~\ref{tab:expmt_nr}.
    The most striking observation is that increasing the capacity of the cargo $n_\text{max}$ may \textit{decrease} social welfare. For example, consider the case when $b_\text{max}=2, n_\text{max}=2, r=1$ (shown in blue in Table~\ref{tab:expmt_nr}, right) where the payoffs are $(8.0, 2.0)$. This achieves the maximum attainable social welfare by smuggling $n_\text{max}=2$ items and having the Sheriff accept a bribe of $2$.
When $n_\text{max}$ is increased to $5$ (red entry in the table), the payoffs to \textit{both} players drop significantly, and even more so when $n_\text{max}$ increases further. While counter-intuitive, this behavior is consistent in that the Smuggler may not benefit from loading $3$ items every time he was recommended to load $2$; the Sheriff reacts by inspecting more, leading to lower payoffs for both players. 
    \begin{wraptable}{r}{5.75cm}
        \centering\fontsize{7.7}{8.1}\selectfont
        \setlength\tabcolsep{1.15mm}
        \begin{tabular}{c|ccc}
          \toprule
          $n_\text{max}$ & $r=1$ & $r=2$ & $r=3$\\
          \midrule
          $1$  & (3.00, 2.00) & (3.00, 2.00) & (3.00, 2.00) \\
          $2$  & \textcolor{blue}{\bf (8.00, 2.00)} & (8.00, 2.00) & (8.00, 2.00) \\
          $5$  & \textcolor{red}{\bf (2.28, 1.26)} & \textcolor{violet}{\bf (8.00, 2.00)} & (8.00, 2.00) \\
          $10$ & (1.76, 0.93) & (7.26, 1.82) & (8.00, 2.00) \\
          \bottomrule
        \end{tabular}
        \caption{Payoffs for (Smuggler, Sheriff) in the SW-maximizing EFCE.}
        \label{tab:expmt_nr}
        \vspace{-.3cm}
    \end{wraptable}
    That behavior is avoided by increasing the number of rounds $r$: by increasing to $r=2$ (entry shown in purple), the behavior disappears and we revert to achieving a social welfare of 10 just like in the instance with $n_\text{max}=2, r=1$. With sufficient bargaining steps, the Smuggler, with the aid of the mediator, is able to convince the Sheriff that they have complied with the recommendation by the mediator. This is because the mediator spends the first $r-1$ bribes to give a `passcode' to the Smuggler so that the Sheriff can verify compliance---if an `unexpected' bribe is suggested, then the Smuggler must have deviated, and the Sheriff will inspect the cargo as punishment. With more rounds, it is less likely that the Smuggler will guess the correct passcode. 
    See also Appendix~\ref{sec:appendix_sheriff} for additional insights.

    \section{Experimental Evaluation}\label{sec:experiments}

Even our proof-of-concept algorithm based on the BSSP formulation and subgradient descent, introduced in Section~\ref{sec:saddle point}, is able to beat LP-based approaches using the commercial solver Gurobi \citep{gurobi} in large games. This confirms known results about the scalability of methods for computing NE, where in the recent years first-order methods have affirmed themselves as the only algorithms that are able to handle large games. 

We experimented on \textit{Battleship} over a range of parameters while fixing $\gamma=2$. All experiments were run on a machine with 64 cores and 500GB of memory. For our method, we tuned step sizes based on multiples of 10. In Table~\ref{tab:expmt_comp_bs}, we report execution times when all constraints (feasibility and deviation) are violated by no greater than $10^{-1}$, $10^{-2}$ and $10^{-3}$. Our method outperforms the LP-based approach for larger games. 
However, while we outperform the LP-based approach for accuracies up to $10^{-3}$, Gurobi spends most of its time reordering variables and preprocessing and its solution converges faster for higher levels of precision; this is expected of a gradient-based method like ours. On very large games with more than 100 million variables, both our method and Gurobi fail---in Gurobi's case, it was due to a lack of memory while in our case, each iteration required nearly an hour which was prohibitive. The main bottleneck in our method was the projection onto $\cX_1$ and $\cX_2$.
We also experimented on the Sheriff game and obtained similar findings (Appendix~\ref{sec:appendix_sheriff_expt}).

\begin{table}[ht]
        \centering\fontsize{8.5}{11}\selectfont
        \setlength\tabcolsep{1.15mm}
        \begin{tabular}{ccc|ccc|ccc|ccc}
          \toprule
          \multirow{2}{*}{$(H, W)$} & \multirow{2}{*}{$r$} & {Ship} & \multicolumn{2}{c}{\#Actions} & \#Relevant & \multicolumn{3}{c|}{Time (LP)} & \multicolumn{3}{c}{Time (ours)} \\[-.5mm]
          &&length&Pl 1&Pl 2&seq. pairs& $10^{-1}$ & $10^{-2}$ & $10^{-3}$ & $10^{-1}$ & $10^{-2}$ & $10^{-3}$\\
          \midrule

          (2, 2) & $3$ & 1 & 741 & 917 & 35241 & 2s & 2s & 2s & 1s & 2s & 3s \\
          (3, 2) & $3$ & 1 & 15k & 47k & 3.89M & 3m 6s & 3m 17s & 3m 24s & 8s & 34s & 52s \\ 
          (3, 2) & $4$ & 1  & 145k & 306k & 26.4M & 42m 39s & 42m 44s & 43m & 2m 48s & 14m 1s & 23m 24s\\ 
          (3, 2) & $4$ & 2  & 970k & 2.27M  &    111M     &  \multicolumn{3}{c|}{--- out of memory$^\dagger$ ---} & \multicolumn{3}{c}{--- did not achieve $^\ddagger$ ---} \\
          \bottomrule
        \end{tabular}\vspace{2mm}
      \caption{\#Relevant seq. pairs is the dimension of $\xi$ under the compact representation of \cite{vonStengel08:Extensive}. For LPs, we report the fastest of Barrier, Primal and Dual Simplex, and 3 different formulations (Appendix~\ref{sec:appendix_3lp}). $^\dagger$ Gurobi went out of memory and was killed by the system after $\sim\!3000$ seconds
$^\ddagger$ Our method requires $1$ hour per iteration and did not achieve the required accuracy after $6$ hours.}
        \label{tab:expmt_comp_bs}
    \end{table}

    \section{Conclusions}
In this paper, we proposed two parameterized benchmark games in which EFCE exhibits interesting behaviors. We analyzed those behaviors both qualitatively and quantitatively, and isolated two ways through which a mediator is able to compel the agents to follow the recommendations. We also provided an alternative saddle-point formulation of EFCE and demonstrated its merit with a simple subgradient method which outperforms standard LP based methods.

We hope that our analysis will bring attention to some of the computational and practical uses of EFCE, and that our benchmark games will be useful for evaluating future algorithms for computing EFCE in large games. 

    \section*{Acknowledgments}
    This material is based on work supported by the National
    Science Foundation under grants IIS-1718457, IIS-1617590,
    and CCF-1733556, and the ARO under award W911NF-17-1-0082.
    Gabriele Farina is supported by a Facebook fellowship.
    Co-authors Ling and Fang are supported in part by a research grant from Lockheed Martin.

    \bibliographystyle{custom_arxiv}
    \bibliography{dairefs}

    \clearpage
    \appendix
    \section{Recovering the Linear Program of \cite{vonStengel08:Extensive}}
\label{sec:appendix_dualize}
Recall the continuous version of the primal version of the inner maximization problem which was obtained by adding the multipliers $\lambda_{i,\hat\sigma} \in [0,1]$.
\begin{align*}
    \max_{\lambda, z_{i, \hat\sigma}} \quad &\sum_{i \in \{1, 2\} }\sum_{\hat\sigma \in \Sigma_i}\xi_i^\top A_{i, \hat\sigma} z_{i,\hat\sigma} - \lambda_{i,\hat\sigma} b_{i, \hat\sigma}^\top \xi_i \\
    \text{such that} \quad & \sum_{i\in\{1,2\}}\sum_{\hat\sigma\in\Sigma_i} \lambda_{i,\hat\sigma} = 1 \\
    & \lambda_{i, \hat\sigma} \geq 0 \\
    & z_{i,\hat\sigma} \in \lambda_{i,\hat\sigma} Y_{i,\hat\sigma}, \qquad \forall i\in\{1,2\},\hat\sigma\in\Sigma_i
\end{align*}
where $z_{i, \hat\sigma}$ may be seen as the sequence form representation of a game rooted at a particular information set of player $i$, and scaled by the factor $\lambda_{i, \hat\sigma}$. By expanding the sequence form constraints which define 
$Y_{i, \hat\sigma}$, we get
\begin{align*}
    \max_{\lambda, z} \quad &\sum_{i \in \{1, 2\} }\sum_{\hat\sigma \in \Sigma_i}\xi_i^\top A_{i, \hat\sigma} z_{i,\hat\sigma} - \lambda_{i,\hat\sigma} b_{i, \hat\sigma}^\top \xi_i \\
    \text{such that} \quad & \sum_{i\in\{1,2\}}\sum_{\hat\sigma\in\Sigma_i} \lambda_{i,\hat\sigma} = 1 \\
    & \lambda_{i, \hat\sigma} \geq 0 \\
    & z_{i, \hat\sigma} \geq 0 \\
    & F_{i, \hat\sigma}z_{i, \hat\sigma} - \lambda_{\hat\sigma}f_{i, \hat\sigma} = 0, \qquad \forall i\in\{1,2\},\hat\sigma\in\Sigma_i
\end{align*}
where $F_{i, \hat\sigma}$ and $f_{i, \hat\sigma}$ are sequence form constraint matrices rooted at the information set $\hat{I}$ containing $\hat\sigma$, with the only difference that instead of having the `empty sequence' be equal to $1$, we require that all actions belonging to $\hat{I}$ sum to $\lambda_{i, \hat\sigma}$. We are now in a position to take duals; the only non-zero elements on the right hand side of the constraints are from the sum-to-one constraints over $\lambda_{i, \hat\sigma}$. This give s the following dual
\begin{align*}
    \min_{u, \nu_i(\hat\sigma, \cdot)} \qquad & u \\
    \text{such that} \qquad & F_{i, \hat{\sigma}}^T \nu_i(\hat\sigma, \cdot) \geq A^T_{i, \hat\sigma} \xi_i \qquad \forall i\in\{1,2\},\hat\sigma\in\Sigma_i \\
    &u-\nu_i(\hat\sigma, \hat{I}) \geq -b_{i, \hat\sigma}^T \xi_i \qquad \forall i\in\{1,2\},\hat\sigma =(\hat{I}, \hat{a}) \in\Sigma_i,
\end{align*}
where $u$ and $\nu(\hat\sigma, \cdot)$ are free in sign. Combining this with the outer minimization over $\xi_i$ gives us the linear program by \cite{vonStengel08:Extensive}, up to a change in variable names and conventions.
    \section{Derivation of Probabilities over Terminal States}\label{sec:appendix_terminal_states}
In order to express the utility of a trigger agent, it is necessary to compute the probability of the game ending in each of the terminal states. Before that, we will review the notation introduced in earlier sections in more detail.

\begin{itemize} [leftmargin=*,nolistsep,itemsep=0mm]
    \item $Z$ be the set of terminal states (or equivalently, outcomes) in the game, and $z \in Z$ is some terminal state.
    \item $u_i(z)$ be the utility obtained by player $i$ if the game terminates at some terminal state $z\in Z$.
    \item $\Pi_i$ be the set of pure reduced-normal-form strategies for Player $i$. We also require notation for subsets of $\Pi_i$, namely,
    \begin{itemize}
        \item $\Pi_i(I)$, is the set of reduced-normal-form strategies that can lead to information set $I$ (which belongs to player $i$) assuming that the other player acts to do so as well. This is equivalent (assuming no zero-chance nodes or disconnected game trees) to saying that all reduced-normal-from strategies in $\Pi_i(I)$ have \textit{some} action which belongs to information set $I$.
        \item $\Pi_i(I, a)$ is the set of reduced normal form strategies which will lead to information set $I$ \textit{and} recommend the action $a$ in $I$. This is equivalent to the set of reduced normal form strategies which contain $a$ as part of their recommendation (this set is typically a subset of $\Pi_i(I, a)$).
        \item $\Pi_i(z)$ is the set of reduced-normal-form strategies which can lead to the terminal state $z$ (assuming the other player players to do so). This is equivalent to the set of reduced-normal-form strategies which contain the $\sigma=(I, a)$ pair where $\sigma=(I, a)$ is the unique last information set-action pair which has to be encountered by player $i$ before the terminal state $z$.
    \end{itemize}
    \item $\Sigma_i$ the set of information set-action pairs $(I, a)$ (also known as \emph{sequences}), where $I$ is an information set for Player $i$ and $a$ is an action at set $I$. 
    \item $\sigma_i(z)$ is the last $(I, a)$ pair belonging to Player $i$ encountered before some terminal state $z\in Z$. 
\end{itemize}
 
We are interested in characterizing the random variable $t_{\hat\sigma}:\Pi_1\times \Pi_2 \times \Pi_1(\hat I) \to Z$ that maps a triple of reduced-normal-form strategies $(\pi_1, \pi_2, \hat\pi_1)$ to the terminal state of the game that is reached when Player 1 is a $\hat\sigma$-trigger agent and Player 2 follows all recommendations. That is, we want to find the probabillity of terminating at each $z \in Z$ for a $\hat\sigma$-trigger agent, given the mediator's joint distribution $\mu$ over reduced normal form strategies and the trigger strategy $\hat\mu$ for the deviating player, which we will assume to be Player $1$ without loss of generality. For each trigger $\hat\sigma$, the terminal leaves may be partitioned into the following $3$ sets. 
\begin{itemize}[leftmargin=*,nolistsep,itemsep=0mm]
  \item $Z_{\hat\sigma}$ (or equivalently $Z_{\hat{I},\hat{a}}$) is the set of terminal nodes that are descendants of the trigger $\hat\sigma=(\hat{I},\hat{a})$. In order for the game to end in one of these terminal nodes, it is necessary that the recommendation device recommended to Player 1 the trigger sequence $\hat\sigma$, and therefore the agent must have deviated. Furthermore, Player 2 must have been recommended the terminal sequence $\sigma_2(z)$ corresponding to the terminal state, and finally $\hat\pi_1$ must be compatible with $\sigma_1(z)$. We can capture all these constraints concisely by saying that the sampled $(\pi_1, \pi_2, \hat\pi_1)$ must be such that $\pi_1 \in \Pi_1(\hat\sigma)$, $\pi_2 \in \Pi_2(z)$ and $\hat\pi_1 \in \Pi_1(z)$. Therefore the probability that a $\hat\sigma$ trigger agent terminates at some $z \in Z_{\hat\sigma}$ is given by,
      \begin{align*}
        \bbP_{\mu,\hat\mu}[t_{\hat\sigma} = z \in Z_{\hat \sigma}] &= \left(\sum_{\substack{\pi_1 \in \Pi_1(\hat\sigma)\\ \pi_2 \in \Pi_2(z)}}\mu(\pi_1, \pi_2)\right)\left(\sum_{\hat\pi_1 \in \Pi_1(z)} \hat\mu_1(\hat\pi_1)\right),
      \end{align*}
   where the first term in the product is the probability that Player $2$ plays to $z$ \textit{and} Player $1$ gets triggered, and the second term is the probability that the deviation strategy from Player $1$ upon getting triggered is one that reaches $z$. 
  \item $Z_{\hat I}$ is the set of terminal states that are descendant of any sequence in $\hat I$, \emph{except} $\hat\sigma$. In order for the game to reach this terminal state, recommendations issued to Player 1 by the correlation device must have been such that Player 1 reached $\hat I$. There are two cases: either the correlation device recommended $\hat\sigma$ at $\hat I$, or it did not. In the former case, Player 1 started deviating (using the sampled reduced-normal-form plan $\hat\pi_1$); hence, in this case it must be $\hat\pi_1 \in \Pi_1(z)$. In the latter case, Player 1 does not deviate from the recommendation, and therefore it must be $\pi_1 \in \Pi_1(z)$. Either way, Player 2 must have been recommended the terminal sequence $z$ corresponding to the terminal state $z$; that is, $\pi_2 \in \Pi_2(z)$. Collecting all these constraints, it must be
      \[
        (\pi_1, \pi_2, \hat\pi_1) \in \Pi_1(\hat\sigma) \times \Pi_2(z)\times \Pi_1(z)\ \cup\ \Pi_1(z) \times \Pi_2(z) \times \Pi_1(\hat I).
      \]
      Using the fact that the two cases as to whether or not Player 1 was recommended $\hat\sigma$ or not at $\hat I$ are disjoint, we can write
      \begin{align*}
        \bbP_{\mu,\hat\mu}[t_{\hat\sigma} = z \in Z_{\hat I}] &= \left(\sum_{\substack{\pi_1 \in \Pi_1(\hat\sigma)\\ \pi_2 \in \Pi_2(z)}}\mu(\pi_1, \pi_2)\right)\left(\sum_{\hat\pi_1 \in \Pi_1(z)} \hat\mu_1(\hat\pi_1)\right) + \left(\sum_{\substack{\pi_1 \in \Pi_1(z)\\\pi_2\in\Pi_2(z)}} \mu(\pi_1, \pi_2)\right).
      \end{align*}
  The first term in the summation may be understood as the probability that the agent was triggered and its deviation was to play something other than $\hat\sigma$. The second term is that probability that the agent was not triggered and the game simply terminates at $z$ based on $\mu$.
  \item Finally, $Z_{-\hat I}$ is the set of terminal nodes that are neither in $Z_{\hat\sigma}$ nor in $Z_{\hat I}$. If the game has ended in any terminal state that belongs to $Z_{-\hat I}$, Player 1 has not deviated from the recommended strategy, since they have never even reached the trigger information set, $\hat I$. Hence, in this case it must be $(\pi_1, \pi_2) \in \Pi_1(z) \times \Pi_2(z)$. Hence,
      \[
       \bbP_{\mu,\hat\mu}[t_{\hat\sigma} = z \in Z_{-\hat I}] = \sum_{\substack{\pi_1 \in \Pi_1(z)\\\pi_2\in\Pi_2(z)}} \mu(\pi_1, \pi_2).
      \]
\end{itemize}
With the above, we can finally express the constraint that no deviation strategy $\hat\mu$ can lead to a higher utility for Player 1 than simply following each recommendation. Indeed, for all $\hat\mu$, the utility of the trigger agent is expressed as
\[
  \sum_{z\in Z} u_1(z) \bbP_{\mu,\hat\mu}[t_{\hat\sigma} = z],
\]
where the correct expression for $\bbP_{\mu,\hat\mu}[t_{\hat\sigma} = z]$ must be selected depending on whether $z \in Z_{\hat\sigma}$, $z \in Z_{\hat I}$ or $z\in Z_{-\hat I}$. On the other hand, the utility of an agent that follows all recommendations is
\[
  \sum_{z\in Z} u_1(z) \bbP_{\mu,\hat\mu}[\pi_1 \in \Pi_1(z), \pi_2 \in \Pi_2(z)] = \sum_{z\in Z} \left(u_1(z) \sum_{\substack{\pi_1 \in \Pi_1(z)\\\pi_2\in\Pi_2(z)}} \mu(\pi_1, \pi_2)\right).
\]
Therefore, following all recommendations is a best response for the $\hat\sigma$-trigger agent if and only if $\mu$ is chosen so that
\begin{equation}\label{eq:incentive_appendix}
  \sum_{z\in Z} u_1(z) \left(\bbP_{\mu,\hat\mu}[t_{\hat\sigma} = z] - \sum_{\substack{\pi_1 \in \Pi_1(z)\\\pi_2\in\Pi_2(z)}} \mu(\pi_1, \pi_2)\right) \le 0 \qquad\forall \hat\mu \in \Delta^{|\Pi_1(\hat I)|}.
\end{equation}

The crucial observation is that all the probabilities $\bbP_{\mu,\hat\mu}[t = z]$ defined above can be expressed via the following quantities:
\[
  y_{1, \hat\sigma}(z) \defeq \sum_{\hat\pi_1 \in \Pi_1(z)} \hat\mu_1(\hat\pi_1)\quad \forall z\in Z; \qquad \xi_1(\sigma_1; z) \defeq \sum_{\substack{\pi_1 \in \Pi_1(\sigma_1)\\\pi_2 \in \Pi_2(z)}} \mu(\pi_1, \pi_2)\quad\forall \sigma_1 \in \Sigma_1, z\in Z.
\]
For example, for all $z \in Z_{\hat I}$ we can write
\[
  \bbP_{\mu,\hat\mu}[t_{\hat\sigma} = z] = \xi_1(\hat\sigma; z) y_{i, \hat\sigma}(z) + \xi_1(\sigma_1(z); z).
\]

When deviations relative to Player 2 are brought into the picture, the following two sets of symmetric quantities also become relevant:
\[
  y_{2, \hat\sigma}(z) \defeq \sum_{\hat\pi_2 \in \Pi_2(z)} \hat\mu_1(\hat\pi_2)\quad \forall z\in Z; \qquad \xi_2(\sigma_2; z) \defeq \sum_{\substack{\pi_1 \in \Pi_1(z))\\\pi_2 \in \Pi_2(\sigma_2)}} \mu(\pi_1, \pi_2)\quad\forall \sigma_2 \in \Sigma_2, z\in Z.
\]
It would now seem natural to perform a change of variables, and pick (correlated) distributions for the random variables $y_{1, \hat\sigma}(\cdot), y_{2, \hat\sigma}(\cdot), \xi_1(\cdot; \cdot)$ and $\xi_2(\cdot; \cdot)$ instead of $\mu, \hat\mu_1$ and $\hat\mu_2$. Since there are only a polynomial number (in the game tree size) of combinations of arguments for these new random variables, this approach would allow one to remove the redundancy of realization-equivalent normal-form plans and focus on a polynomially-small search space. In the case of the random variables $y_{1, \hat\sigma}$ and $y_{2, \hat\sigma}$, it is clear that the change of variables is possible via the sequence form~\citep{vonStengel02:Computing}. Therefore, the only difficulty is in characterizing the space spanned by $\xi_1$ and $\xi_2$ as $\mu$ varies across the probability simplex. In two-player perfect-recall games with no chance moves, this is exactly the merit of the landmark work by~\citet{vonStengel08:Extensive}. In particular, the authors prove that in those games the space of feasible $\xi_1, \xi_2$ can be captured by a polynomial number of linear constraints. 
    \section{Proof of Lemma~\ref{lem:xi is polytope}}
\label{sec:proof xi is polytope}

The vectors of entries $\xi_1(\cdot;\cdot), \xi_2(\cdot;\cdot)$ are obtained from $\mu$ via a linear mapping. Hence, the set of values that can be assumed by $\xi$ is the image of the probability simplex via a linear mapping. Since images of polytopes via linear functions are polytopes, the lemma holds. 
    \section{Details of Our Subgradient Method} \label{sec:appendix_bertsekas}

First, observe that the objective $v^*(\xi)$
is convex since it is the maximum of linear functions of $\xi$. This suggests that we may perform subgradient descent on $v^*$, where the subgradients are given by
\begin{align}
\partial/\partial \xi\ v^*(\xi) =  A_{i^*, \hat\sigma^*}y^*_{i^*, \hat\sigma^*} - b_{i,\hat\sigma^*},
\label{eq:appendix_subgradient_expr}
\end{align}
where $(i^*, \hat\sigma^*, y_{i^*, \hat\sigma^*}^*)$ is a triplet which maximizes the objective function $v^*(\xi)$. The computation of such a triplet is a straightforward bottom-up traversal of the game tree.
In order to maintain feasibility (that is, $\xi\in\cX$), it is necessary to project onto $\cX$, which is challenging in practice, because we are not aware of any distance-generating function which allows for efficient projection onto this polytope. This is so even in games without chance (where $\xi$ can be expressed by a polynomial number of constraints~\citep{vonStengel08:Extensive}). Furthermore, iterative methods such as Dykstra's algorithm, add an dramatic overhead to the cost of each iterate.

To overcome this hurdle, we observe that in games with no chance moves, the set $\cX$ of correlation plans---as characterized by~\citet{vonStengel08:Extensive} via the notion of consistency constraints---can be expressed as the intersection of three sets:
(i) $\cX_1$, the sets of vectors $\xi$ that only satisfy {consistency constraints} for Player 1; (ii) $\cX_2$, the sets of vectors $\xi$ that only satisfy {consistency constraints} for Player 2, respectively; and
(iii) $\bbR^n_+$, the non-negative orthant.
$\cX_1$ and $\cX_2$ are polytopes defined by equality constraints only. Therefore, an exact projection (in the Euclidean sense) onto $\cX_1$ and $\cX_2$ can be carried out efficiently by precomputing a suitable factorization the constraint matrices that define $\cX_1$ and $\cX_2$. In particular, we are able to leverage the specific combinatorial structure of the constraints that form $\cX_1$ and $\cX_2$ to design an efficient and parallel sparse factorization algorithm (see Appendix~\ref{sec:appendix_bertsekas} for the full details). Furthermore, projection onto the non-negative orthant can be done conveniently, as it just amounts to computing a component-wise maximum between ${\xi}$ and the zero vector.
Since $\cX = \cX_1 \cap \cX_2 \cap \bbR^n_+$, and since projecting onto $\cX_1$, $\cX_2$ and $\bbR^n_+$ individually is easy, we can adopt the recent algorithm proposed by \cite{Wang13:Incremental} designed to handle exactly this situation. In that algorithm, gradient steps are interlaced with projections onto $\cX_1$, $\cX_2$ and $\bbR^n_+$ in a cyclical manner. This is similar to projected gradient descent, but instead of projecting onto the intersection of $\cX_1$, $\cX_2$ and $\bbR^n_+$ (which we believe to be difficult), we project onto just one of them in round-robin fashion. This simple method was shown to converge by \cite{Wang13:Incremental}, however, no convergence bound is currently known.

\subsection{Factorization of constraints over $\cX$}
 \citet{vonStengel08:Extensive} showed that a $\xi$ may be represented compactly as a 2-dimensional matrix, with dimensions equal to the sequence form representation \citep{vonStengel96:Efficient} of each player, where one is only interested in entries corresponding to relevant sequence pairs (\citet{vonStengel08:Extensive} for details). Then, the aforementioned constraints (i) and (ii) defining $\cX_1$ and $\cX_2$ are equivalent to the sequence form constraints for each row and column respectively. Constraint (iii) ensures that the entries of $\xi$ are non-negative and that the entry for the empty sequence pair is $1$.

Observe that projection (based on L2 distance) on $\cX_1$ and $\cX_2$ \emph{individually} can be decomposed a series of disjoint projections (either on rows or columns) and thus computed in parallel. We now show that L2-projection of each individual row/column over the sequence form constraints \citep{vonStengel08:Extensive} may be done efficiently. Let $F$ and $f$ be matrices and vectors corresponding to the sequence form constraints $Fx-f=0$. Here, $F$ is a (sparse) matrix of size $\#\text{information sets} \times \#\text{sequences}$ which contains entries in $\{ -1, 0, 1\}$ and $f$ is a vector containing $1$'s or $0$'s. Each information set  in $F$ corresponds to the `flow' constraints for an information set, with a coefficient of $-1$ for the unique parent sequence leading to that information set, and a coefficient of $-1$ for all sequences immediately following that information set. \footnote{In our implementation, $f$ need not have this restriction, but it is included her to be more in line with the classic work of \cite{vonStengel96:Efficient}.} Given a vector $w$, the projection onto the affine space given by $Fx-f$ is given by the optimization problem
\begin{align*}
    \min_x \qquad & \frac{1}{2} \|x - w\|_2^2 \\
    \text{s.t.} \qquad& Fx-f = 0
\end{align*}
The closed form solution may be found using Lagrange multipliers, and is given by
\begin{align*}
    x^* = F^T (FF^T)^{-1}(f-Fw) + w,
\end{align*}
Since F is sparse, the main difficulty in computing $x^*$ is overcome if we can efficiently compute $(FF^T)^{-1}q$ for any vector $q$.
\begin{lemma}\label{lem:projection}
Let $F$ be the sequence form constraint matrix. Computing $(FF^T)^{-1}q$ may be done efficiently.
\end{lemma}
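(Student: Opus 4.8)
The plan is to show that $FF^{\top}$, after reordering the player's information sets from deepest to shallowest in the game tree, admits a sparse $LDL^{\top}$ factorization with \emph{no fill-in}; once this factorization is precomputed, every subsequent evaluation of $(FF^{\top})^{-1}q$ is just two triangular back-substitutions, and the total work is linear in the number of nonzeros of $F$, hence linear in the game size.

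First I would spell out the nonzero pattern of $FF^{\top}$. Index the rows of $F$ by the information sets $I_1,\dots,I_m$ of the player and the columns by the sequences; the row for $I$ has one nonzero entry for the parent sequence $p_I$ and one nonzero entry for each child sequence in $C_I=\{(I,a)\}_a$, with signs chosen so that $Fx-f=0$ encodes the flow constraints. Perfect recall guarantees that every sequence $(I,a)$ belongs to a unique information set and that every information set has a unique parent sequence; consequently, for $I\neq J$ the sets $C_I,C_J$ are disjoint and $p_I$ lies in $C_K$ for at most one $K$. A short case analysis of the inner product of two rows then shows that the diagonal entry for $I$ equals $1+|C_I|$ and that the $(I,J)$ entry is nonzero precisely when $I$ and $J$ share the same parent sequence, or the parent sequence of one of them is a child sequence of the other (``one level apart'' in the information-set tree). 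In particular $FF^{\top}$ is sparse, symmetric, and invertible, using the standard fact that the sequence-form matrix $F$ has full row rank (equivalently $FF^{\top}\succ 0$).

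Next I would eliminate variables in reverse depth order (any reverse topological order of the information-set tree, with siblings grouped together) and argue by induction on the elimination step that no fill-in ever occurs, so that the elimination tree is essentially the information-set tree itself. Indeed, at the moment the variable of an information set $I$ is eliminated, all of its descendants in the information-set tree are already gone, so by the induction hypothesis its surviving neighbors lie in $\{K\}\cup S$, where $K$ is the unique information set (or the virtual root) with $p_I\in C_K$ and $S$ is the set of the other information sets sharing the parent sequence $p_I$; but the pattern above shows $\{I\}\cup S\cup\{K\}$ is already a clique (the members of $S$ are mutually adjacent, and $I$ and every member of $S$ are adjacent to $K$), so eliminating $I$ introduces no new edge. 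Equivalently, the graph of $FF^{\top}$ is chordal and reverse depth order is a perfect elimination ordering, its maximal cliques being exactly the groups consisting of one child sequence $\sigma$ of an information set $K$ together with the information sets immediately following $\sigma$, which overlap only in single vertices. It follows that the factor has $O(\mathrm{nnz}(F))$ nonzeros, can be built in one bottom-up pass over the tree (and in parallel across disjoint subtrees), after which each solve $x=(FF^{\top})^{-1}q$ costs $O(\mathrm{nnz}(F))$ work.

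The step I expect to be the main obstacle is the no-fill-in induction: one must check carefully, leaning on perfect recall, that the surviving neighbors of an information set at the instant of its elimination are always confined to one sibling group plus its hinge vertex $K$ — in particular that ``cousins'' (information sets lying below distinct child sequences of a common $K$) are genuinely non-adjacent in $FF^{\top}$, so that the sibling cliques are glued along single vertices rather than overlapping more richly. The remaining ingredients — the explicit entries of $FF^{\top}$, full row rank of $F$, and the back-substitution cost accounting — are routine.
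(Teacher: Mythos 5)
Your proposal is correct and follows essentially the same route as the paper: both compute the explicit nonzero pattern of $FF^{\top}$ (diagonal $1+\#\text{actions}$, $-1$ for parent/child information sets, $+1$ for siblings sharing a parent sequence) and then observe that eliminating variables in bottom-up order over the information-set tree produces zero fill-in, so a sparse Cholesky/$LDL^{\top}$ factor can be precomputed and each solve reduces to cheap triangular substitutions. Your explicit induction showing that the surviving neighbors at each elimination step already form a clique (sibling group plus hinge vertex) is a more careful justification of the no-fill-in claim that the paper merely asserts "due to the structure of $FF^{\top}$."
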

\begin{proof}
The key here is to exploit the structure of $FF^T$. Observe that $FF^T$ is symmetric, positive-definite and has dimension equal to the number of information sets. Furthermore, $FF^T$ may be expressed in closed form:
\begin{align*}
    (FF^T)_{ij} &=
    \begin{cases}
    -1 \qquad & i \text{ is the direct parent/child of } j \\
    1, \qquad & i \text{ is the sibling of } j \\
    1+\text{\# actions at } i, \qquad & i=j \\
    0, \qquad &\text{otherwise}
    \end{cases},
\end{align*}
where $i,j$ above are information sets, and $i$ being the parent of $j$ means that there is some action in $i$ which can lead to information set $j$ (without any other information set from the same player in between), and $i$ being the sibling of $j$ means that the (unique) sequence leading to $i$ and $j$ are the same. Observe that $FF^T$ is \textit{almost}, but not quite tree-structured. However, it is sparse and more importantly, has $0$ fill-in if we order variables in a bottom-up fashion in the player's game tree. That is, we treat $FF^T$ as a graph with information sets as vertices, then repeatedly remove vertices (information sets) in a bottom-up fashion, while forming cliques with all neighbors of the removed vertex. Note that due to the structure of $FF^T$, we will not introduce any new edges. In other words, performing Gaussian elimination on $(FF^T)$ may be done without introducing additional non-zero entries. If the number of maximum number of actions that an information set may have is upper bounded by a constant $a_\text{max}$, then eliminating a single variable will only require time quadratic in $a_\text{max}$. This means that computing $(FF^T)^{-1}q$ may be done efficiently when $x_\text{max}$ is   small.
\end{proof}

\textbf{Remark.} \qquad Lemma~\ref{lem:projection} and the fact that L2 projections onto sequence form constraints can be done efficiently may be of separate interest to researchers beyond the scope of EFCEs.

In practice, we precompute a sparse Cholesky factor of $FF^T$. From the previous discussion, the Cholesky factors are guaranteed to be sparse and easily stored. Withe the Cholesky decomposition of $FF^T$, finding $(FF^T)^{-1}q$ becomes straightforward. This precomputation is done once per trigger-sequence $\hat\sigma$, since the set of relevant sequence pairs for each trigger sequence (i.e., the location of non-zero entries in the matrix representing $\xi$) differs. This precomputation step is trivially parallel. In our experiments, computing the Cholesky factors was rarely the bottleneck (although we do include this timing when evaluating our method)

\subsection{Social Welfare Maximization}
Observe that Equation~\eqref{eq:v sw} may be rewritten in the form of $(1-\lambda_\text{sw}) v^*_\text{sw}(\xi) - \lambda_{\text{sw}} b^T \xi$ for a suitable vector $b$. Hence, the gradient for the modified objective is given by 
\begin{align*}
\partial/\partial \xi\ v^*(\xi) = 
\begin{cases}
	A_{i^*, \hat\sigma^*}y^*_{i^*, \hat\sigma^*} - b_{i,\hat\sigma^*} \qquad & v^*(\xi) \geq \kappa(\xi)
 	\\
 	-b \qquad & \text{otherwise}
 	\end{cases},
\end{align*}
where $\kappa(\xi) = \tau - \sum_{z \in Z} u_1(z)\xi_1(z; z) - \sum_{z \in Z} u_2(z)\xi_2(z; z)$, the difference between $\tau$ and the social welfare obtained from $\xi$. 

    \section{Battleship Game} \label{sec:appendix_bs}
    \subsection{Extended Description of the Game}
    A game of Battleship is parameterized by a tuple $( H , W , \cS, r, \gamma)$, where
    \begin{itemize}
        \item the integers $H, W \ge 1$ define the height and width of the playing field for each player;
        \item $\cS$ is an ordered list containing ship descriptions $s_i$ for each player. Each description is a pair $s_i = (\ell_i, v_i)$, where $\ell_i$ is the length of the $i$-th ship and $v_i$ is its value;
        \item $r \ge 1$ is the number of rounds in the game;
        \item $\gamma \ge 1$ is a \emph{loss multiplier} that controls the relative value of a losing versus destroying ships.
    \end{itemize}

    The game proceeds in $2$ phases: \textit{ship placement} and \textit{shooting}. During the ship placement phase, the players (starting with Player 1) take turns placing their ships on their playing field. The players must place all their ships, in the same order in which they appear in $\cS$, on the playing field. The ship placement phase ends when all ships have been placed. We remark that the players' playing fields are separate: in other words, there are two playing fields of dimensions $H \times W$, one per player. The ships may be placed either horizontally or vertically on each player's grid (playing field); all ships must lie entirely within the playing field and may not overlap with other ships the player has already placed. Finally, the locations of a player's ships is private information for each player.

    In the shooting phase, players take turns firing at each other; Player 1 starts first. This is done by selecting a pair of integer coordinates $(x, y)$ that identify a cell within the playing field. After taking a shot, the player is told if the shot was a \textit{hit}, that is, the selected cell $(x, y)$ is occupied by a ship of the opponent, or if it is a \textit{miss}, that is, $(x, y)$ does not contain an opponent's ship. If all cells covered by a ship have been shot at, the ship is destroyed and this fact is announced. Note that the identity of the ship which was hit or sunk is not revealed; players only know that \textit{some} ships was hit or sunk. The game ends when $r$ shots have been made by each player, or if one player has lost all their ships, whichever comes first. At the end of the game, each player's payoff is computed as follows: for each opponent's ship that the player has destroyed, the player receives a payoff equal to the value $v$ of that ships; for each ship that the player has lost to the opponent, the player incurs a negative payoff equal to $\gamma\cdot v$, that is the value of the ship times the loss multiplier $\gamma$. Note that when $\gamma > 1$ the game is general sum.
    
    Since $\gamma \geq 1$, this asymmetric model describes situations where players are encouraged to destroy other ships, but are ultimately more protective of their own assets. The loss multiplier $\gamma$ governs this gap; a higher value of $\gamma$ makes so that each player values their ships more than destroying others. Note that when $\gamma=1$, we obtain a zero-sum version of battleships (with varying scores for each ship). 
    
    For the remainder of the discussion, we define the \textit{social welfare} (SW) of any outcome to be the sum of payoffs of each player. We will demonstrate that with the aid of a mediator (the correlation device), the social welfare of the optimal correlated equilibria are dramatically higher than the social welfare of even the best Nash equilibrium. In other words, the mediator leads to significantly less destructive outcomes, and leads to more frequent ties where the players sometimes agree to deliberately miss their opponents, while still retaining incentive-compatibility and rationality in the standard game-theoretic sense.

\subsection{Analysis of Social-Welfare-Maximizing EFCE}
    We analyze one social-welfare-maximizing EFCE in the same small instance of Battleship as the previous section. The mediator in this EFCE recommends the players a ship placement that is sampled uniformly at random and independently for each players. This results in $9$ possible scenarios (one per possible ship placement) in the game, each occurring with probability $\nicefrac{1}{9}$. Due to the symmetric nature of ship placements, only two scenarios are relevant: whether the two players are recommended to place their ship in the same spot, or in different spots. Figure~\ref{fig:battleship_strategy} details the strategy of the the mediator in each of these two scenarios, assuming that the players do not deviate. Note that the game trees in Figure~\ref{fig:battleship_strategy} are parametric on the recommended ship placements $a$ and $b$; all $9$ possible ship placements can be recovered from Figure~\ref{fig:battleship_strategy} by setting $a$ and $b$ to appropriate values in $\{1, 2,3 \}$.
    \begin{figure}[ht]
        \includegraphics[width=0.49\linewidth]{figs/toy_bs_viz_compressed/same.pdf}\hfill
        \includegraphics[width=0.49\linewidth]{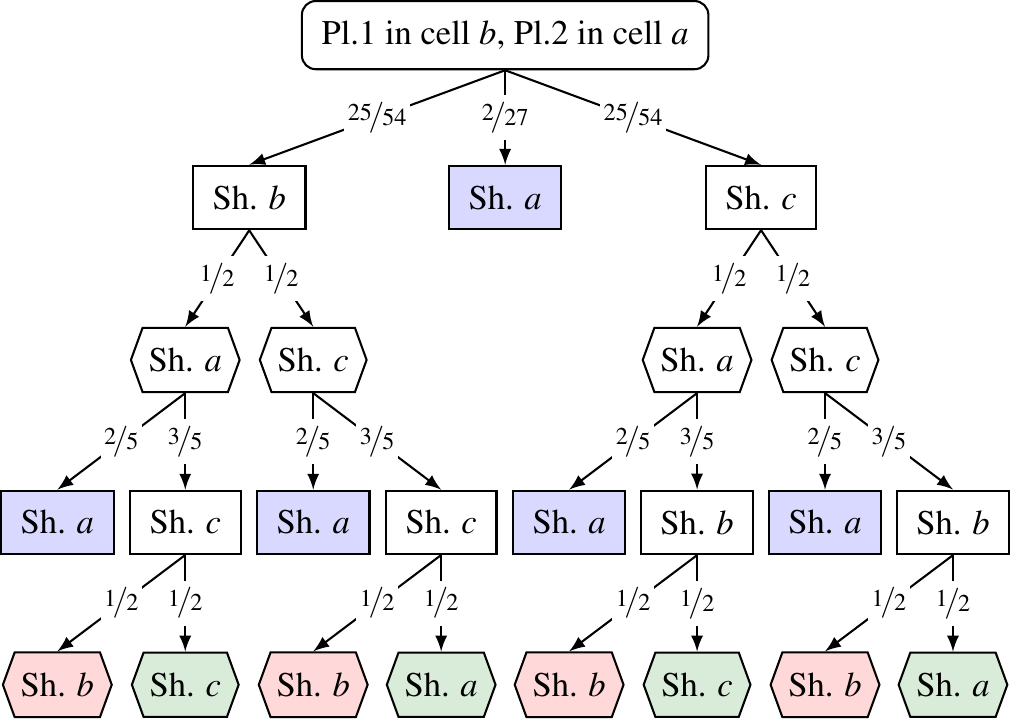}
    \caption{Example of a playthrough of Battleship assuming both players were recommended to place their ship in $a$ (left), or that Player $1$ and $2$ were recommended to place their ships in $a$ and $b$ respectively (right). For both pictures, the numbers along each edge denote probabilities of each action being recommended; no edge is shown for actions recommended with zero probability. Squares and hexagons denote actions taken by Players 1 and 2 respectively. Similarly, blue and red nodes represent cases where Players 1 and 2 sink their opponent's ship, respectively. Green leaf nodes are where the game results in no ship loss. The \textit{Shoot} action is abbreviated to `Sh.'}
    \label{fig:battleship_strategy}
    \end{figure}

     For both game trees, note that the correlation device suggests that Player 1 shoot at the Player 2's ship with a low $\nicefrac{2}{27}$ probability, and deliberately miss with high probability. As hinted in earlier sections, this type of recommendation is key to understanding why the EFCE succeeds in promoting less destructive outcomes. One may wonder why this behavior is incentive-compatible (that is, what are the incentives that compel Player 1 into not defecting), since the player may choose to randomly fire in any of the 2 locations that were \textit{not} recommended, and get almost $\nicefrac{1}{2}$ chance of winning the game immediately. The key is that if Player 1 does so and does not hit the opponent's ship, then the mediator can punish him by recommending that Player 2 shoot at the location of Player 1's ship. Since players value their ships more than destroying their opponents, the player is incentivized to avoid such a situation by accepting the recommendation to (most probably) miss.

     A similar situation arises in the first move of Player 2. Here, Player 2 is recommended to \textit{deliberately} miss, hitting each of the 2 empty spots with probability $\nicefrac{1}{2}$. If he deviates and attempts to destroy Player 1's ship, then he risks the mediator revealing his location to his opponent if his shot misses; this risk is enough to keep Player 2 `in line'. The second move of Player 1 (third shot of the full game) bears a similar ideas. Here, Player 1 is recommended to hit Player 2's ship with probability $\nicefrac{2}{5}$. Similar to his first shot, Player 1 may deviate and fire at the remaining location and enjoy $\nicefrac{3}{5}$ chance of winning the game out right. Yet, this behavior is discouraged, since in the $\nicefrac{2}{5}$ chance that he misses the shot (i.e., the recommendation was in fact, the correct location of Player 2's ship), then his location would be revealed by the mediator and he loses the next round. Again, this threat from the mediator encourages peaceful behavior, even though the recommendation to Player 1 reveals a more accurate `posterior' of Player 2's ship location, as compared to the uniform distribution of $\nicefrac{1}{2}$. While making these recommendations, the mediator ensures that Player 2 has a uniform distribution of Player 1's ship location, meaning that even though Player 2 has the final move, he may not do better than guessing at uniform at this stage.

     \paragraph{Remark.} It is important to note that Figure~\ref{fig:battleship_strategy} does not convey the full information of the correlated plans. Crucially, it does not show the consequences suffered if a player deviates from his recommended strategy---in this case, the deviating player stops receiving recommendations and risks having his ship's location revealed to the opponent. These `counterfactual' scenarios may be counter-intuitive but are key to understanding how SW-maximizing EFCEs achieve their purpose.

    \section{Sheriff Game} \label{sec:appendix_sheriff}
\subsection{Extended Version of the Game}
    The Sheriff game is described by the the parameters $ v,p,s \in \mathbb{R}^+, n_\text{max}, b_{\text{max}}, r \in \mathbb{N}$. The parameters $v,p,s \geq 0$ describe the value of \textit{each} illegal item, the penalty that the Smuggler has to pay for \textit{each} discovered illegal item, and the compensation that the Sheriff pays to the Smuggler in the case of a false alarm. At the beginning of the game, the Smuggler loads $n \in \{0, \dots, n_\text{max} \}$ items into his cargo. The amount of goods loaded is unknown to the Sheriff. The game then proceeds for $r \geq 1$ rounds of bargaining. Each round comprises two steps. First, the Smuggler offers a bribe $b_t \in \{0, \dots, b_\text{max}\}$ to the Sheriff, where $t \leq r$ is the round of bargaining. After that, the Sheriff responds with `Yes' or `No'.

    All actions are public knowledge, except for the selection of cargo contents, which only the Smuggler knows. In the final step, we compute the payoffs to players. The outcome of the game is decided by the \textit{last} step of bargaining. In particular, the first $r-1$ rounds of bargaining have no explicit bearing on the outcome of the game, except for purposes of coordination. The payoffs for each outcome are: 
    \begin{enumerate}
        \item Sheriff accepts the bribe. The Smuggler's gets $n \cdot v - b_r$, and the Sheriff's gets the bribe offered $b_r$.
        \item Sheriff inspects and discovers illegal items. The Smuggler is fined and gets a payoff of $-n \cdot p$ while the Sheriff gets a payoff of $n \cdot p$.
        \item Sheriff chooses to inspect and does not find illegal items. The Smuggler receives a compensation of $s$, while the Sheriff gets $-s$.
    \end{enumerate}
    The objective of the mediator is to maximize social welfare in the space EFCEs. Ideally, this will involve the Smuggler bringing in goods and the Sheriff accepting bribes -- any other outcome would simply be zero-sum, since it no goods will be successfully smuggled and money only changes hands between players. A qualitative description of the welfare maximizing equilibrium is not obvious, since the game contains elements of both lying and bargaining. 
    
    \paragraph{Remark.} The communication in the bargaining steps is at a superficial level similar to that in  \textit{cheap talk} \citep{Crawford82:Strategic}, where costless and non-binding signals are transmitted between players. However, in our setting, the signals are transmitted in the middle of the game as opposed to just at the beginning. More importantly, the presence of the mediator during the phase of bargaining bestows more uses for the signals---in particular, the mediator may be able to take punitive measures against players who deviate from recommendations, since future recommendations will be withheld from players who deviate. We will illustrate the importance of this at the end of Appendix~\ref{sec:appendix_rounds_of_bargaining}.
    
\subsection{Effect of Additional Rounds of Bargaining ($r$)}
\label{sec:appendix_rounds_of_bargaining}
    \begin{table}[ht]
        \centering\fontsize{7.7}{8.1}\selectfont
        \begin{tabular}{c|cccc}
          \toprule
           $n_\text{max}$ & $r=1$ & $r=2$ & $r=3$ & $r=4$ \\
          \midrule
          $1$  & (4.00, 1.00) & (4.00, 1.00) & (4.00, 1.00)  & (4.00, 1.00) \\
          $2$  & (1.24, 0.19) & (4.00, 1.00) & (4.00, 1.00) & (4.00, 1.00) \\
          $5$  & (0.89, 0.11) & (1.11, 1.00) & (4.00, 1.00) & (4.00, 1.00) \\
          $10$ & (0.82, 0.00) & (0.84, 1.00) & (3.62, 1.00) & (4.00, 1.00) \\
          \bottomrule
        \end{tabular}
                \vspace{1mm}
        \caption{Payoffs for (Smuggler, Sheriff) when players play according to the SW-maximizing EFCE in the Sheriff game with $b_\text{max}=2$ (right).}
        \label{tab:expmt_nr_bmax2}
        \vspace{-.3cm}
    \end{table}

     We illustrate the effect of the non-consequential bribes with two small settings, where $v=5, p=1, s=1, n_\text{max}=3, b_\text{max}=2, r\in \{1, 2\}$. Examples of SW-maximizing equilibria are shown in Figure~\ref{fig:sheriff_n3r1_main} and Figure~\ref{fig:sheriff_n3r2_main}. \footnote{As with the analysis of Battleship, note that this only shows interactions of players on the equilibrium path, that is, the graph omits what would happen if some player deviated.}

    \begin{figure}[ht]
    \centering
    \includegraphics[scale=.8]{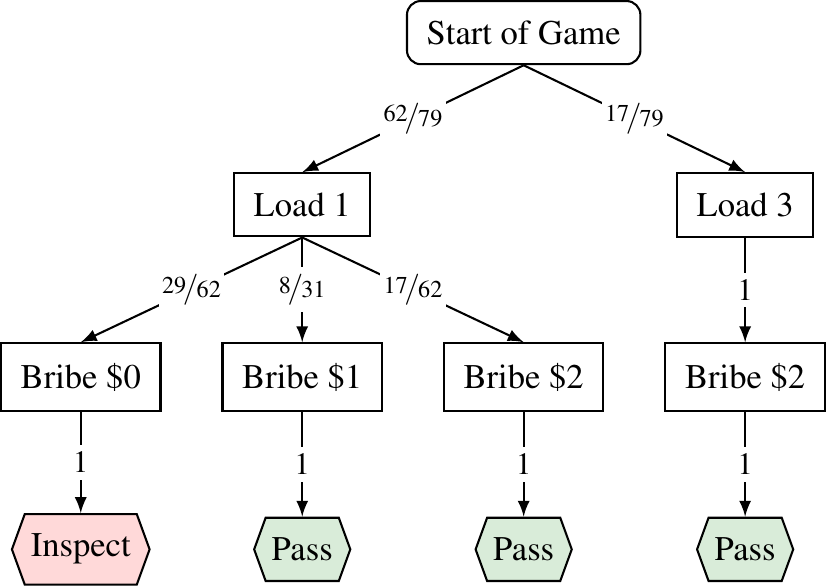}
    \caption{Example of a playthrough of the Sheriff game with $r=1$. Edge labels correspond to action probabilities, edges with $0$ probability are omitted. Squares and hexagons denote actions taken by Players 1 and 2 respectively, while green and red nodes denote the Sheriff choosing to pass or inspect.}
    \label{fig:sheriff_n3r1_main}
    \end{figure}

    \begin{figure}[ht]
    \centering
    \includegraphics[scale=.8]{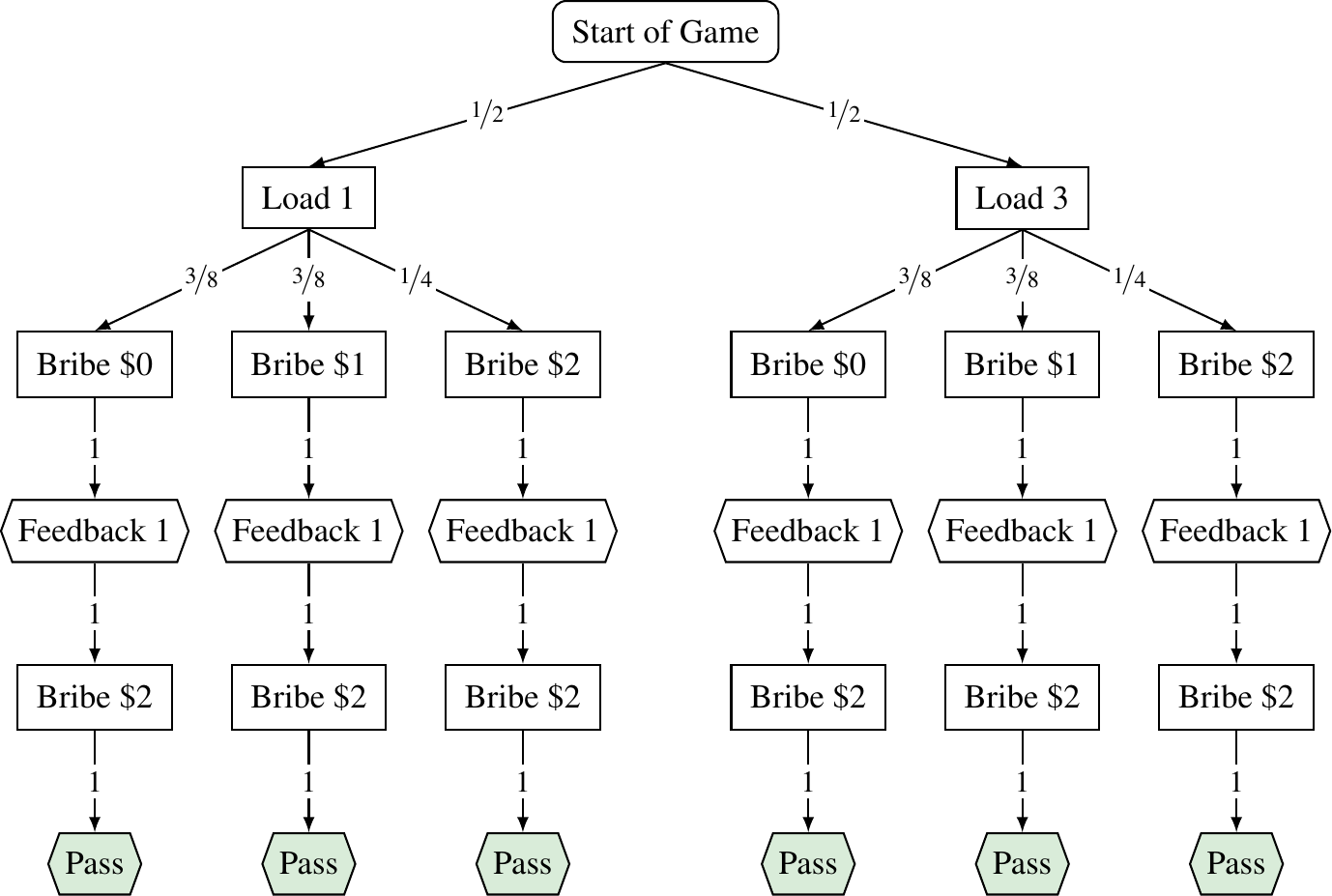}
    \caption{Example of a playthrough of the Sheriff game with $r=2$. Edge labels correspond to action probabilities, edges with $0$ probability are omitted. Squares and hexagons denote actions taken by Players 1 and 2 respectively, while green and red nodes denote the Sheriff choosing to pass or inspect.}
    \label{fig:sheriff_n3r2_main}
    \end{figure}

    The SW maximizing EFCE yields payoffs of  $(3.89, 1.43)$ and $(8.0, 2.0)$ for $r=1$ and $r=2$ respectively. We will first consider the case where $r=2$ (Figure~\ref{fig:sheriff_n3r2_main}. Here, what occurs happens along the equilibrium path is straightforward. The Smuggler loads in $1$ or $3$ items with equal probability. Next, he offers a (non-consequential) bribe of either $0$, $1$, or $2$. Then, he receives some feedback of $1$, and proceeds to offer a bribe of $\$2$, which the Sheriff gladly accepts. The payoffs to players is $(13, 2)$ and $(3, 2)$ depending if the Smuggler was recommended to load $1$ or $3$ items, leading to an average payoff of $(8, 2)$.
    
    The underlying mechanism is in fact fairly straightforward and mirrors the idea in the modified signalling game of \cite{vonStengel08:Extensive}. Assume that a random number is chosen uniformly from $\{0,\dots,b_\text{max}\}$. This acts as a `passcode' which the Sheriff expects from the Smuggler in the first round. This passcode forms part of the correlated plan, and will eventually be revealed to the Smuggler assuming he did not deviate when selecting the number of illegal items (recall that the  sequential nature of the EFCE means that the recommended amount to bribe is not revealed until the Smuggler loads the cargo with the recommended number of items.) In other words, \textit{the first (non-consequential) bribe may be used as a signal which hints to the Sheriff if the Smuggler has deviated}---if it is not equal to the passcode, the Smuggler \textit{must} have deviated somewhere. On the other hand, a deviating Smuggler may successfully guess the passcode with probability no greater than $1/(b_{\text{max}}+1)$; if the number of signals $b_{\text{max}}$ is sufficiently large, then it is near impossible to guess the code. Using these tools, the mediator is able to engineer a `deviation detector' which checks if the Smuggler ever deviated. Note, however, that unlike the Signaling game, the Sheriff is not able to glean exactly how what was recommended (in this case, the number of items in the cargo); he is only able to deduce if the player deviated from the recommendation (in this case, this would be load either $1$ or $3$ items).
   
    Issuing threats to the Smuggler becomes straightforward with this deviation detector. If the Sheriff knows the Smuggler is lying, he employs a `grim trigger' for the rest of the game---in this case, the Sheriff opts to inspect all of the player's cargo, regardless of the bribe offered in the second round. The Smuggler could also be pretending to bring in illegal goods, i.e., by loading $0$ items and hoping that he would guess the \textit{incorrect} passcode, resulting in the Sheriff making a false accusation. However, because the Smuggler's payoff for deceiving the Sheriff in this manner is just $1$, he remains incentivized to stick to the recommendations, which guarantees him a payoff of either $3$ of $13$.
    
    We now make the following hypotheses. First, the effect of additional bargaining rounds $r$ is that the chance of randomly guessing the passcode is reduced. If there are $r$ rounds, then there are $(b+1)^{r-1}$ different possible signals that the Smuggler could have sent to the Sheriff through the first $r-1$ rounds. When $r=1$, this class of correlation plans fails since the bribe by the Smuggler serves both as the answer to the `secret question' and as the actual bribe to be offered. This aliasing of roles is what leads to a lower payoff; the risk of sending an incorrect passcode is not sufficiently high to dissuade the Smuggler from deviating.

    \section{3 LP formulations for computing EFCEs}
\label{sec:appendix_3lp}
Refer to the dualized problem in Appendix~\ref{sec:appendix_dualize}.  Observe that $u$ is the value of the maximum deviation over all $\hat\sigma$---when all incentive constraints are met, $u$ should be non-positive. We propose $3$ different formulations.
\begin{itemize}
    \item Min-Deviation: what was presented in Appendix~\ref{sec:appendix_dualize}.
    \item Feas-Deviation: instead of minimizing $u$ in the objective, replace that by a hard constraint that $u \leq 0$.
    \item Maximum-SW: formulate the LP similar to Feas-Deviation, but with the SW-maximizing objective.
\end{itemize}
    \section{Additional Experiments on Sheriff Game}
\label{sec:appendix_sheriff_expt}
The results for the Sheriff game were run using the parameters $p=1, v=5, b_\text{max}=3, r=5, s=4$ while varying the maximum number of items that can be smuggled $n_\text{max}$. The time required for the error to drop below a certain threshold is reported for both Gurobi and our subgradient method. The results are reported in Table~\ref{fig:sheriff_expt}.
\begin{table}[ht]
        \centering\fontsize{8.5}{11}\selectfont
        \setlength\tabcolsep{1.15mm}
        \begin{tabular}{c|ccc|cccc|cccc}
          \toprule
           \multirow{2}{*}{$n_\text{max}$} & \multicolumn{2}{c}{\#Actions} & \#Relevant & \multicolumn{4}{c|}{Time (LP)} & \multicolumn{4}{c}{Time (ours)} \\[-.5mm]
          &Pl 1&Pl 2&seq. pairs& $2$ & $1$ & $0.75$ & $0.5$ & $2$ & $1$ & $0.75$ & $0.5$\\
          \midrule
          6 & 131k & 37k     & 6.5M & 723 & 723 & 723 & 743 & 42& 42 & 44 & 88 
          \\
          8 & 168k & 37k & 8.4M & 1187  & 1223 & 1223 & 1223 & 63 & 69 & 102  & 333 
          \\ 
          10 & 206k & 37k & 10M & 1662s & 1774 & 1774 & 1829 & 83 & 240 & 298 & \text{N/A} \\
          \bottomrule
        \end{tabular}
        \caption{\#Seq. pairs is the dimension of $\xi$ under the compact representation of \cite{vonStengel08:Extensive}. For LPs, we report the fastest of Barrier, Primal and Dual Simplex, and 3 different formulations (Appendix~\ref{sec:appendix_3lp}). Our subgradient method did not manage to achieve an accuracy of $0.5$ after 1 hour of running.}
        \label{fig:sheriff_expt}
\end{table}
As before, we observe that our method can outperforms Gurobi if lower levels of accuracy are desired. However, it was observed that for higher levels of accuracy, Gurobi requires significantly less time, if our method converges at all. This is because Gurobi spends the majority of its time performing preprocessing steps.
\end{document}